\title{On the I/O Complexity of Sequential and Parallel\\ Hybrid Algorithms for Integer Multiplication}
\author{Lorenzo De Stefani\thanks{lorenzo\_destefani@brown.edu}}
\affil{Department of Computer Science, Brown University}
  \providecommand\BibTeX{{%
    \normalfont B\kern-0.5em{\scshape i\kern-0.25em b}\kern-0.8em\TeX}}}
\newcommand{\BO}[1]{\mathcal{O}\left(#1\right)}
\newcommand{\BOme}[1]{\Omega\left(#1\right)}
\newcommand{\BT}[1]{\Theta\left( #1\right)}
\newcommand{\io}{I/O}
\newcommand{\alg}{\mathcal{A}}
\newcommand{\nproc}{P}
\newcommand{\dom}{D}
\newcommand{\globalInput}{\mathcal{X}}
\newcommand{\subpInput}{\mathcal{Y}}
\newcommand{\subpOutput}{\mathcal{Z}}
\newcommand{\subpsize}{8M}
\newcommand{\halfsubpsize}{4M}
\newcommand{\msgsize}{B_m}
\DeclareMathOperator*{\argmin}{arg\,min}
\newcommand{\nz}{n_0}
\newcommand{\hmm}{\mathfrak{H}}
\newcommand{\uhmm}{\mathfrak{UH}\left(k,\nz{}\right)}
\newtheorem{theorem}{Theorem}
\newtheorem{lemma}[theorem]{Lemma}
\newtheorem{corollary}[theorem]{Corollary}
\newtheorem{definition}{Definition}
\begin{document}
\begin{titlepage}
\pagenumbering{gobble}
\maketitle

\begin{abstract}
Almost asymptotically tight lower bounds are derived for the Input/Output (\io{}) complexity $IO_\mathcal{A}\left(n,M\right)$ of a general class of hybrid algorithms computing the product of two integers,
each represented with $n$ digits in a given base $s$, in a two-level storage hierarchy with
$M$ words of fast memory, with different digits stored in different memory words. The considered hybrid algorithms combine the  \emph{Toom-Cook-}$k$ (or \emph{Toom}-$k$)  fast integer multiplication approach with  computational complexity $\BT{c_kn^{\log_k \left(2k-1\right)}}$, and ``\emph{standard}''  integer multiplication algorithms which compute $\Omega\left(n^2\right)$ digit multiplications. 
\sloppy We present an  $\Omega\left(\left(n/\max\{M,\nz{}\}\right)^{\log_k \left(2k-1\right)}\left(\max\{1,\nz{}/M\}\right)^2M\right)$ lower bound for the \io{} complexity of a class of ``\emph{uniform, non-stationary}'' hybrid algorithms, where $\nz{}$ denotes the threshold size of sub-problems  which are computed using standard algorithms with algebraic complexity  $\Omega\left(n^2\right)$. As a special case, our result yields an asymptotically tight $\BT{n^2/M}$ lower bound for the \io{} complexity of any standard integer multiplication algorithm. As some sequential hybrid algorithms from this class exhibit \io{} cost within a $\BO{k^2}$ multiplicative term of the corresponding lower bounds, the proposed lower bounds are almost asymptotically tight and indeed tight for constant values of $k$. By extending these results to a distributed memory model with $\nproc{}$ processors, we obtain both memory-dependent and memory-independent \io{} lower bounds for parallel versions of hybrid integer multiplication  algorithms. 
All the lower bounds are derived for the more general class of ``\emph{non-uniform, non-stationary}'' hybrid algorithms that allow recursive calls to have a different structure, even when computing sub-problems with  the same input size, and to use different versions of Toom-$k$. 




\end{abstract}
\end{titlepage}
\pagenumbering{arabic}

\section{Introduction}\label{sec:intro}
Data movement has a critical impact on the performance of computing systems as it affects both execution time and energy utilization. This technological
trend~\cite{patterson2005getting} is destined to continue, as physical
limitations on minimum device size and maximum message speed lead
to unavoidable costs when moving data, whether across the levels of a
hierarchical memory system or between processing elements of a
parallel system~\cite{bilardi1995horizons}. While  communication
requirements of  algorithms have been widely investigated in the literature,
obtaining meaningful lower bounds on the data movement requirement remains
an important and challenging task.

In this paper, we study the \io{} complexity of a class of hybrid algorithms for computing the product of two $n$-digit integer numbers. Such algorithms combine \emph{standard} (or \emph{long}) integer multiplication algorithms that require $\BOme{n^2}$ digit operations, with \emph{fast} algorithms based on the Toom-Cook-$k$ (or Toom-$k$) recursive algorithm~\cite{toom1963complexity} which requires $\Theta\left(c_k n^{\log_k
(2k-1)}\right)$ digit operations where $2k-1$ denotes the number of sub-problems generated at each recursive step, and where $c_k$ is a parameter that is  constant with  respect to $n$
while it grows with $k$.  The considered hybrid algorithms are ``\emph{non-uniform, non-stationary}'' as they allow for different algorithms to be used in different recursion levels (hence, non-stationary), and as they allow recursive calls to have a different structure, even when
they refer to the multiplication of integer values in the same recursive level and of the same input size in terms of the number of digits (hence, non-uniform).  This class models, for example, algorithms that optimize for different input sizes of recursively generated sub-problems. This corresponds to an actual practical scenario: while for large input size the use of fast algorithms such as Toom-Cook's and Karatsuba's (which can be considered as a special case of the former) leads to a reduction of the number of the required digit-wise operations compared to standard $\BOme{n^2}$ algorithms, as the size of the input of the recursively generated sub-problems decreases, the asymptotic benefit of fast algorithms wanes due to the increasing relative impact of the constant multiplicative factor, and standard $\BOme{n^2}$ algorithms become more efficient.

While of actual practical importance, the characterization of hybrid algorithms is challenging due to the intrinsic irregular nature of such algorithms and, hence, the irregular structure of the corresponding Computational Directed Acyclic Graphs (CDAGs). In turn, this considerably  complicates the analysis of the combinatorial properties of the CDAG, which is the foundation of many of \io{} lower bound techniques presented in the literature (e.g.,~\cite{ballard2012graph,jia1981complexity,savage1995extending}).

In~\cite{dest19}, De Stefani proposed a technique that overcomes such challenges and yields asymptotically tight \io{} lower bounds of a general class of hybrid algorithms for matrix multiplication combining \emph{fast} Strassen-like algorithms and standard ``\emph{definition}'' algorithms. In this work, we demonstrate the generality of the approach in~\cite{dest19} by deploying it to the analysis of hybrid integer multiplication algorithms. However, the analysis requires several non-trivial adaptations that correspond to the specific properties of the integer multiplication function and to the combinatorial properties of Toom-Cook algorithms. Further, we extend the technique in~\cite{dest19} to obtain \emph{memory-independent} \io{} lower bounds for a distributed-memory parallel model with $\nproc{}$ processors.  All the \io{} lower bounds for sequential algorithms presented in this work are within at most a $\BO{k_{max}^2}$ multiplicative factor of corresponding upper bounds, where $k_{max}$ denotes the maximum value for any version of Toom-$k$ used in the hybrid scheme, and hence are asymptotically tight whenever $k$ is constant with respect to the size of the input. Ours are the first \io{} lower bounds for hybrid algorithms for integer multiplication.\\

\noindent\textbf{Previous and Related Work:} Karatsuba~\cite{karatsuba1962multiplication} showed that two $n$-digits
integers can be multiplied with $O(n^{\log_2 3})$ digit operations, hence with asymptotically fewer than the $n^2$ operations required by the straightforward implementation of the standard \emph{long}, or ``\emph{Schoolbook}'', integer multiplication algorithms.

Following this effort, the \emph{Toom-Cook}
algorithmic scheme was originally introduced for circuits by Andrei
Toom~\cite{toom1963complexity} and later adapted to software programs by
Stephen Cook~\cite{cook}. The multiplicands $A$
and $B$ are viewed as the result of evaluating, at a suitable point
$z$, two polynomials $p(x)$ and $q(x)$ of degree $k{-}1$ (where $k$ is
a design parameter within the scheme), whose coefficients are integers
directly obtained from the digits of $A$ and $B$,
respectively. The product $AB=p(z)q(z)$ can then be
obtained by evaluating at $z$ the polynomial $r(x)=p(x)q(x)$, of
degree $(2k{-}2)$. 
When the scheme is implemented using the same value $k$ at all levels of recursion the resulting algorithm is commonly referred as
``\emph{Toom}-$k$'' and computes $\Theta\left(c_k n^{\log_k
(2k-1)}\right)$ digit operations, where $c_k$ is a  constant with respect to $n$,
while it grows with $k$. Knuth~\cite[p. 286d]{knuth1981seminum} showed
how, by suitably choosing $k$ as a function of the (sub)problem size,
the Toom-Cook scheme can be implemented to compute $O\left(n
2^{\sqrt{2 \log n}} \log n\right)$ operations.  Asymptotically faster algorithms based on the application of the \emph{Fast Fourier Transform} have been presented in literature. Among these, the \emph{Sch\"onhage-Strassen algorithm}~\cite{schonhage1971schnelle} with complexity $\Theta\left(n \log n \log \log n \right)$, and the \emph{F\"urer algorithm}~\cite{furer2009faster} with complexity $\Theta\left(n \log n 2^{\BO{\log^*n}} \right)$, where $\log^*n$ is the \emph{iterated logarithm}. While the former is used in practice for input integers greater than $2^{2^{17}}$~\cite{garcia2005can}, F\"urer's algorithm and others, asymptotically faster ones (i.e.,~\cite{covanov2019fast,harvey2019faster,harvey2019integer,harvey2016even}), are not practical for reasonable input size.

In practice, Toom-Cook is slower than the standard
algorithm for small numbers.  It becomes competitive for input integers of
intermediate size, up to $2^{2^{17}}$~\cite{garcia2005can}, before the
Sch\"onhage-Strassen becomes actually faster.\\

\io{} complexity has been introduced in the seminal work by Hong and
Kung~\cite{jia1981complexity}; it denotes the number of data
transfers between the two levels of a memory hierarchy with a fast
memory (or \emph{cache}) of $M$ words and a slow memory with an unbounded number of
words. Hong and Kung presented techniques to develop lower bounds for the \io{} complexity of computations modeled by \emph{computational directed acyclic graphs} (CDAGs). The resulting lower bounds
apply to all the possible execution schedules of the given CDAG, including those with recomputation, that is, those for which some vertices of the CDAG are evaluated multiple times. In the same paper, the authors presented \io{} lower bounds for several important algorithms, including standard matrix multiplication and Fast Fourier Transform. The techniques
of~\cite{jia1981complexity} have also been extended to obtain tight
communication bounds for the definition-based matrix multiplication in
some parallel settings~\cite{ballard2012brief,irony2004communication,scquizzato_et_al:LIPIcs:2014:4493}. 

While a general $\left(S+1\right)T = n^2/64$ time ($T$)-space ($S$) tradeoff for the integer multiplication function is known~\cite[Theorem 10. 5.3]{savage97models}, to the best of our knowledge, this work provides the first asymptotically tight \io{} lower bounds for \emph{all} standard \emph{long} integer multiplication algorithms which require $\BOme{n^2}$ digit-wise operations.

In~\cite{BilardiS19}, Bilardi and De Stefani applied the ``\emph{G-flow technique}'' \io{} technique (originally introduced in~\cite{bilardi2017complexity}) to obtain almost asymptotically tight for a class of \emph{uniform, non-stationary} Toom-Cook algorithms. Their technique is based on the analysis of the \emph{Partial Grigoriev's flow} of the integer multiplication function. In this work, we extend their result to a class of hybrid non-uniform, non-stationary algorithms. In~\cite{dest19}, De Stefani provided an analysis of the \io{} complexity of a class of hybrid algorithms for matrix multiplication combining \emph{fast} Strassen-like algorithms and standard ``\emph{definition}'' algorithms. In this work, we build on the general framework introduced in~\cite{dest19} for the analysis of hybrid integer multiplication algorithms.

\noindent\textbf{Our results:}
In our main result, we present an \io{} lower bound for a class $\hmm{}$ of non-uniform, non-stationary hybrid integer multiplication algorithms when executed in a two-level storage hierarchy with $M$ words of fast
memory. Algorithms in $\hmm{}$ combine the Toom-Cook recursive strategy and standard algorithms that require  $\BOme{n^{2}}$ digit operations. Such hybrid algorithms allow recursive calls to have a different structure, even when they refer to the multiplication of matrices in the same recursive level and of the same input size. 
The result in Theorem~\ref{thm:genmatmul} relates the \io{} complexity of algorithms in $\hmm{}$ to the number and the input size of an opportunely selected set of the sub-problems generated by the algorithms themselves. As a special case, Theorem~\ref{thm:corgenmatmul} yields an asymptotically tight $\BT{n^2/M}$ \io{} lower bound for standard integer multiplication algorithms (Corollary~\ref{cor:standardio}). 

\sloppy We specialize the result of Theorem~\ref{thm:genmatmul} to obtain, in Theorem~\ref{thm:corgenmatmul}, an $\Omega\left(\left(\frac{n}{\max\{M,\nz{}\}}\right)^{\log_k (2k-1)}\left(\max\{1,\frac{\nz{}}{M}\}\right)^2M\right)$ lower bound for the \io{} complexity of algorithms in a subclass $\uhmm{}$ of $\hmm{}$ composed by \emph{uniform non-stationary} hybrid algorithms where Toom-$k$ is used in the initial levels of recursion until the size of the generated sub-problems falls below a set threshold $\nz{}$ and those are computed using standard algorithms executing  $\BOme{n^2}$ digit operations. 
The application of the technique introduced in~\cite{dest19} for the analysis of algorithms in~$\hmm{}$ requires a careful combination of  the ``\emph{G-flow}'' \io{} lower bound technique originally introduced by Bilardi and De Stefani for Toom-Cook algorithms in~\cite{BilardiS19}, with an analysis of the combinatorial properties of standard integer multiplication algorithms.

By modifying the approach used to obtain our main results for the hierarchical memory model (or \io{} model), we obtain \io{} lower bounds for distributed-memory parallel models with~$\nproc{}$ processors each equipped with a \emph{local} (i.e., non-shared) memory. For these parallel models, we derive lower bounds for the ``\emph{bandwidth cost}'', that is for  the number of memory words transmitted (either sent or received)  by at least one processor during the execution of the algorithm, and for the ``\emph{latency cost}'', that is for  the number of messages composed of at most $B_m$ memory words transmitted (either sent or received)  by at least one processor during the execution of the algorithm. These results also provide lower bounds to the number of memory words (resp., messages) exchanged on the algorithm's ``\emph{critical path}'' as defined in~\cite{yang1988critical}.  

We obtain both \emph{memory-dependent} lower bounds by assuming a bound on the size of the local memory available to each processor, and \emph{memory-independent} lower bounds by assuming that either the input is originally distributed among the processors in a balanced fashion, or that all the processors contribute in a balanced way to the computational effort of the algorithm. While the most general version of these results characterizes the bandwidth cost of the entire class $\hmm{}$ (Theorem~\ref{thm:genmatmul} for memory-dependent bound, Theorem~\ref{thm:memindbound} for memory-independent bound under balanced input distribution), by specializing these result we characterize the bandwidth cost of parallel algorithms in $\uhmm{}$ (Theorem~\ref{thm:corgenmatmul} for memory-dependent bound, Theorem~\ref{thm:specmemindbound} for memory-independent bound under balanced input distribution, and Corollary~\ref{cor:specmemindboundbalanced2} for memory-independent bound under balanced participation to the computational effort). 

As a special case, these results wield novel lower bound for the bandwidth cost of any standard integer multiplication algorithms (Corollary~\ref{cor:standardio} for memory-dependent bound, Corollary~\ref{cor:mindsta} for memory-independent bound under balanced input distribution, and Theorem~\ref{thm:balancompsta} for memory-independent bound under balanced participation to the computational effort), and an $\BOme{\frac{n}{P^{1/\log_k(2k-1)}}}$ memory-independent bandwidth cost lower bound for uniform Toom-$k$ algorithms assuming a balanced distribution of the input values among the processors.

All the lower bounds presented in this paper allow for the recomputation of intermediate values.   Given the existence of sequential algorithms whose \io{} cost is  within at most a $\BO{k^2}$ multiplicative factor of the lower bounds for sequential hybrid algorithms in Theorem~\ref{thm:genmatmul} and Theorem~\ref{thm:corgenmatmul}, we can conclude that, for values of $k$ which are constant with respect to $n$, these  algorithms are \io{} optimal and our lower bounds are asymptotically tight. Since the (almost) matching algorithms do not recompute any intermediate value, we conclude that using recomputation may reduce the \io{} complexity of the considered classes of hybrid algorithms by at most a $\BO{k^2}$ multiplicative factor.


\section{Preliminaries}\label{sec:preliminaries}
We consider algorithms that compute the product of two integers: $C = A\times B$. We assume the input integers to be expressed as a sequence of $n$ base-$s$ digits in \emph{positional notation}. We further assume that each integer is represented as an unsigned integer with an additional bit to denote the sign. 

For a given integer $A$, we denote its expansion in base $s$ as:
\begin{equation*}
    A = \left(A[n-1],A[n-2],\ldots A[0]\right)_s,
\end{equation*}
where $n$ is the number of digits in the base-$s$ expansion of $A$, and its digits are indexed in order from the least significant digit~$A[0]$ to the most significant digit~$A[n-1]$. Further, we refer to $A[i-1]$ (resp., $A[n-i]$) as the $i$-th \emph{least} (resp., \emph{most}) significant digit of $A$ for $i=1,\ldots,n$. 
With a slight abuse of notation, we use $A$ (resp., $B$) to denote both the value being multiplied and the set of input variables to the algorithm. We refer to the number of digits  of  the base-$s$ expansion of an integer $A$ as its  ``\emph{size}'', and we denote it as $|A|$.

In this work we focus on algorithms whose execution can be modeled
as a \emph{Computational Directed Acyclic Graph} (CDAG) $G=\left(V,E\right)$.
Each vertex $v\in V$ represents either an input value or the
result of a unit-time operation (i.e., an intermediate result or one
of the output values) which is stored using a single memory word. For example, each of the input (resp., output) vertices of $G$ corresponds to one of the $n$ digits of the base $s$ expansion of $A$ and $B$ (resp., $C$). The \emph{directed} edges in $E$ represent data dependencies. That is, a pair of vertices $u,v\in V$ are connected by an edge $(u,v)$ directed from $u$ to $v$ if and only if the value corresponding to $u$ is an operand of the unit time operation which computes the value corresponding to $v$.  A \emph{directed path} connecting vertices $u,v\in V$ is
an ordered sequence of vertices starting with $u$ and ending with $v$, such that there is an edge in $E$ directed from each vertex in the sequence to its successor. 

We say that $G'=\left(V',E'\right)$ is a \emph{sub-CDAG} of $G=\left(V,E\right)$ if
$V'\subseteq V$ and $E' \subseteq E \cap (V'\times V')$. Note that, according to this definition, every CDAG is a sub-CDAG of itself. We say that two sub-CDAGs  $G'=\left(V',E'\right)$ and $G''=\left(V'',E''\right)$ of $G$ are \emph{vertex-disjoint} if $V'\cap V''=\emptyset$. Analogously, two directed paths in $G$ are vertex-disjoint if they do not share any vertex.\\

\noindent\textbf{Dominator sets}
When analyzing the properties of CDAGs we use the concept of \emph{dominator set} originally introduced in~\cite{jia1981complexity}. In this work, we use the following, slightly modified, definition:

\begin{definition}[\cite{BilardiS19}]\label{def:dominator}
Given a CDAG $G=(V,E)$, let $I\subset V$ denote the set of its input
vertices. A set $\dom \subseteq V$ is a \emph{dominator set} for $V'\subseteq V$ with respect to $I'\subseteq I$ if every path from a vertex in $I'$ to a vertex in
$V'$ contains at least a vertex of $\dom$. 
When $I'=I$, $\dom$ is simply referred as ``a dominator set for $V'$''.
\end{definition}

Let $G^{\mathcal{A}_n}$ denote the CDAG corresponding to an unspecified straight-line algorithm for computing the product of two input integers $A,B$  whose base-$s$ expression has up to $n$ digits. The following lemma provides a lower bound to the size of any dominator set for selected sets of the output vertices of $G^{\mathcal{A}_n}$:

\begin{lemma}\label{cor:flowcor}
Given $G^{\mathcal{A}_n}$, let $A$ and $B$ denote the two input factor integers to be multiplied expressed as $n$-digit numbers in base $s$, and let $C$ denote the product integer whose base-$s$ expansion has up to $2n$ digits.
Let $X_A$ (resp., $X_B$) denote the set of input vertices of $G^{\mathcal{A}_n}$ associated with the variables corresponding to the $\lfloor n/2\rfloor$ least significant digits of $A$ (resp., $B$). Further, let $Y$ denote the set composed by the $\lfloor n/2\rfloor$ variables corresponding to the digits of $C$ from the $\lfloor n/2\rfloor+1$-th least significant to the $n$-th least significant.
For any $X'\subseteq X_A$ (resp., $X'\subseteq X_B$) and any $Y'\subseteq Y$ we have that any dominator $D$ of $Y'$ with respect to $X'$ must be such that
\begin{equation*}
|D| \geq |X'||Y'|/n.
\end{equation*}
\end{lemma}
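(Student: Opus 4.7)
The plan is to adapt the partial Grigoriev flow technique of~\cite{BilardiS19} to the specific subsets $X'$ and $Y'$ at hand. The argument has two steps: first, relate the dominator size $|D|$ to the number of distinct restrictions to $Y'$ of the multiplication function as $X'$ varies, under a well-chosen fixing of the remaining inputs; second, exhibit a fixing that forces this count to be at least $s^{|X'||Y'|/n}$.

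For the first step, I would fix arbitrary values of all inputs outside $X'$ and use the dominator property to ``cut'' the CDAG at $D$: replacing the sub-CDAG upstream of $D$ by $D$ as a set of fresh inputs yields a residual CDAG in which $Y'$ is computed from $D$ together with the fixed inputs, with no direct dependence on $X'$. Since the memory model assigns a single base-$s$ digit per vertex, this bounds the image of the map $X' \mapsto Y'|_{\text{fixed}}$ by $s^{|D|}$, so it will suffice to construct a fixing under which the image has size at least $s^{|X'||Y'|/n}$.

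For the second step, consider the case $X' \subseteq X_A$ (the case $X' \subseteq X_B$ is symmetric). Fix $A[i] = 0$ for every $i \notin X'$, and set $B = s^{j_0}$ for a shift $j_0 \in \{0,\ldots,n-1\}$ to be determined. Multiplication by the pure power $s^{j_0}$ generates no carries, so $C = A \cdot s^{j_0}$ yields $C[k] = A[k - j_0]$ when $k \geq j_0$ and zero otherwise; the coordinate $C[k]$ varies with $X'$ precisely when $k - j_0 \in X'$. Hence the restricted map has image of size exactly $s^{r(j_0)}$, where $r(j_0) = |X' \cap (Y' - j_0)|$. A routine double-counting gives $\sum_{j_0=0}^{n-1} r(j_0) = |X'|\,|Y'|$, because each pair $(i,k) \in X' \times Y'$ contributes to the unique value $j_0 = k - i$, which lies in $\{1,\ldots,n-1\}$ by the choices of $X_A$ and $Y$. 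Averaging then produces a $j_0$ with $r(j_0) \geq |X'|\,|Y'|/n$, and combining with the first step yields the claim.

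The main obstacle, I expect, is the careful justification of the cut-at-$D$ reduction in the first step, which implicitly relies on the deterministic single-digit nature of each CDAG vertex and on the dominator property being applied to \emph{every} directed path from $X'$ to $Y'$. A secondary concern is that generic carries in integer arithmetic would destroy the clean correspondence between varying input digits and varying output digits on which the counting rests; the choice $B = s^{j_0}$ is engineered precisely to avoid carry propagation by collapsing the restricted computation to a positional shift. The counting and averaging are routine once the shift is in place.
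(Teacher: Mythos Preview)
Your proposal is correct and matches the approach the paper attributes to~\cite{BilardiS19}: your Step~1 is precisely the dominator-to-information-count reduction of~\cite[Lemma~1]{BilardiS19}, and your Step~2---fixing $B=s^{j_0}$ so that multiplication becomes a carry-free shift, then averaging $r(j_0)=|X'\cap(Y'-j_0)|$ over $j_0$---is the Partial Grigoriev flow computation of~\cite[Corollary~5]{BilardiS19}. The paper does not reprove the lemma itself but defers to exactly these two ingredients, so your sketch is faithful to the intended argument.
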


The property stated in Lemma~\ref{cor:flowcor} was originally introduced by Bilardi and De Stefani in~\cite{BilardiS19}. Its proof combines an analysis of the \emph{Partial Grigoriev's flow} of the integer multiplication function~\cite[Corollary 5]{BilardiS19}, and its relation with the size of dominator sets of selected vertices of $G^{\mathcal{A}_n}$ with respect to opportunely chosen subsets of the input vertices~\cite[Lemma 1]{BilardiS19}. We refer the reader to~\cite{BilardiS19} for the detailed proofs.

Remarkably, Lemma~\ref{cor:flowcor} captures a property that is common to \emph{all CDAGs} corresponding to \emph{any} integer multiplication integer. This derives from the analysis of the Partial Grigoriev's Flow which is a property of the integer multiplication function itself, rather than of a particular algorithm used to compute it.  Such generality makes this result particularly helpful for the analysis of hybrid algorithms due to their intrinsic heterogeneity. \\

\noindent\textbf{\io{} model and machine models:}
We assume that sequential computations are executed on a system with a two-level memory hierarchy, consisting of a fast memory or \emph{cache} of size $M$ (measured in memory words) and a \emph{slow memory} of unlimited size. An operation can be executed only if all its operands are in cache. 

We assume both the input integers and intermediate results to be stored in memory expressed as their base-$s$ expansion, with $s\in \mathbb{N}^{+}$, and with $s$ being at most equal to the maximum value which can be maintained in a single memory word plus one. (That is, if a memory word can fit 32 bits, we have $2\leq s \leq 2^{32}-1$.) In particular, we assume each digit in the base-$s$ expansion of a value to be stored in a different memory word.
We assume that the processor is equipped with digit-wise product and algebraic sum elementary operations. Further, we assume that the processor is equipped with operations for producing the most and least significant digits of an integer in base $s$. Unless explicitly stated otherwise, when referring to the ``\emph{digits}'' of integers, we mean the digits of their expansion in the base chosen for their representation in memory (i.e., $s$).

Data can be moved from the slow memory
to the cache by \texttt{read} operations, and, in the other direction,
by \texttt{write} operations. These operations are called
\emph{\io{} operations}. We assume that the input data is stored in the slow
memory at the beginning of the computation. The evaluation of a CDAG in this model can be analyzed using the ``\emph{red-blue pebble game}''~\cite{jia1981complexity}. The number of \io{} operations executed when evaluating a CDAG depends on the ``\emph{computational schedule}'', that is, it depends on the order in which vertices are evaluated and on which values are kept in/discarded from the cache. 
The \emph{\io{} complexity} of a CDAG $G$, denoted as $IO_{M}(G)$, is defined as the minimum number of \io{} operations over all possible computational schedules.
We further consider a generalization of this model known as the ``\emph{External Memory Model}'' by Aggarwal and Vitter~\cite{Aggarwal:1988:ICS:48529.48535},
where $B\geq 1$ values can be moved between cache and consecutive
slow memory locations with a single \io{} operation. 

For any given algorithm $\mathcal{A}$, in this work we only consider ``\emph{parsimonious execution schedules}'', that is schedules such that: (i) each time an intermediate result (excluding the output digits of $C$) is computed, the result is then used to compute at least one of the values of which it is an operand before being removed from the memory (either the cache or slow memory); and (ii) any time an intermediate result is read from the slow to the cache memory, such value is then used to compute at least one of the values of which it is an operand before being removed from the memory or moved back to slow memory using a \texttt{write} \io{} operation. Clearly, any non-parsimonious schedule $\mathcal{C}$ can be reduced to a parsimonious schedule $\mathcal{C}'$ by removing all the steps which violate the definition of parsimonious computation. $\mathcal{C}'$ has therefore less computational and \io{} operations than $\mathcal{C}$. Hence, restricting the analysis to parsimonious computations leads to no loss of generality.

We also consider parallel models with distributed memory where $\nproc{}$ processors are connected by a network and can exchange messages with up to $\msgsize{}$ memory words by means of point-to-point communications. We assume that each processor is equipped with a \emph{local} (non-shared) memory. For these parallel models, we derive lower bounds for the ``\emph{bandwidth cost}'' (corresponding to the case $\msgsize{}=1$), that is for the number of memory words transmitted (either sent or received) by at least one processor during the execution of the algorithm,  and for the \emph{latency cost}, that is for  the number of messages composed of at most $B_m$ memory words transmitted (either sent or received)  by at least one processor during the execution of the algorithm. 
The notion of ``\emph{parsimonious execution schedules}'' straightforwardly extends to these parallel models.

\section{Hybrid integer multiplication algorithms}\label{sec:algdef}
We consider a family of hybrid integer multiplication algorithms obtained by combining the two following classes of algorithms:
 \vspace{2mm}

\noindent\textbf{Standard integer multiplication algorithms:} This class includes all the integer  multiplication algorithms which, given the input factor integers $A,B$ which have up to $n$ digits in their respective base-$s$ expansion, satisfy the following properties: 
    \begin{itemize}
        \item The $n^2$ \emph{elementary products}  $A[i]B[j]$, for $i,j=0,1,\ldots, n-1$, are \emph{directly computed};
        \item Let $C=A\times B$. For $i=0,1,\ldots,2n-2$, each of the $C[i]$'s is computed by  summing the values of the elementary products $A[j]B[k]$, for all integer values $0\leq j,k\leq n-1$ such that $j+k=i$, through a \emph{summation tree} by additions and subtractions only and by summing the eventual carry-over value;
        \item The evaluations of the $C[i]$'s are \emph{independent of each other}. That is, internal vertex sets of the summation trees of all the $C[i]$'s are \emph{disjoint from each other}. 
    \end{itemize}   
    Algorithms in this class have computational complexity $\Omega\left({n^2}\right)$.
    This class includes, among others, the \emph{long} or \emph{Schoolbook} algorithm, the \emph{Grid-method multiplication}, and the \emph{Lattice multiplication}.
 \vspace{2mm}

  \noindent\textbf{Toom-Cook algorithms:} This class includes algorithms following the Toom-Cook paradigm~\cite{toom1963complexity}. Algorithms in this class are also referred as \emph{Toom}-$k$ algorithms, where $k>0$ denotes the number of parts in which the digits of the input operands are divided at every recursive step to generate $2k-1$ sub-problems. While variants of the Toom-Cook paradigm for which $k$ can be a non-integer number (e.g., $k=1.5$, $k=2.5$) have been presented in the literature, in this work we consider only versions of the Toom-Cook paradigm for which $k\geq 2$ is an integer. Algorithms in this class follow five common steps: 
    \begin{enumerate}
        \item{\textbf{Splitting:}} The digits of the base-$s$ expansion of $A$ (resp., $B$) are split into $k$ blocks of $\lceil n/k \rceil$ successive digits:
        \begin{align*}
        A_i &= \left(A[\min\{n,(i+1)\lceil n/k \rceil\}-1]\ldots A[i\lceil n/k \rceil]\right)_s,\\
        B_i &= \left(B[\min\{n,(i+1)\lceil n/k \rceil\}-1]\ldots B[i\lceil n/k \rceil]\right)_s,
        \end{align*}
        for $0\leq i\leq k-1$.
        The $A_i$'s and $B_i$'s are then used as coefficients to construct the $k-1$-degree polynomials, $p(x)= \sum_{i=0}^{k-1} A_{i}x^{i-1}$ and $q(x)= \sum_{i=0}^{k-1} B_{i}x^{i-1}$, with the property that $p(s^{\lceil n/k \rceil}) = A$ and $q(s^{\lceil n/k \rceil}) = B$.
        \item \textbf{Evaluation:} In this step the algorithm evaluates the polynomials $p(x)$ and $q(x)$ for the $2k-1$ chosen \emph{distinct evaluation points} $x_0,x_1,\ldots,x_{2k-2}$. Let $\mathbf{x}$ denote the vector such that $\mathbf{x}[i]=x_i$ for $0\leq i\leq 2k-2$, and
let $\textbf{V}\left(\textbf{x}\right)$ denote the \emph{Vandermonde matrix}, such that the the value of the entry in the $i$-th row and $j$-th column is given by:
\begin{equation*}
\left(\textbf{V}\left(\textbf{x}\right)\right)_{i,j}= x_i^j,\ for\  0\leq i,j \leq 2k-2.
\end{equation*}
Let $p(\mathbf{x})$ (resp., $q(\mathbf{x})$) denote the column vector such that $p(\mathbf{x})[i] = p(x_i)$ (resp., $q(\mathbf{x})[i] = q(x_i)$) for $0\leq i \leq 2k-2$.
Finally, let $\mathbf{A}$ (resp., $\mathbf{B}$) denote the column vectors such that $\mathbf{A}[i]= A_i$ for $0\leq i\leq k-1$, and $\mathbf{A}[i]= 0$ for $k\leq i\leq 2k-2$ (resp., $\mathbf{B}[i]= B_i$ for $0\leq i\leq k-1$, and $\mathbf{B}[i]= 0$ for $k\leq i\leq 2k-2$).

The values of the $p(x_i)$'s and $q(x_i)$'s are computed as: $
p(\mathbf{x}) =  \mathbf{V}(\mathbf{x})\mathbf{A}$, $q (\mathbf{x})=  \mathbf{V}(\mathbf{x})\mathbf{B}$. 
While any choice of $2k-1$ distinct evaluations point for the Vandermonde matrix is admissible, small integer values (eg., $0,1,-1,2,\ldots$) are generally chosen in order to reduce the computational cost of the algorithm. 
\item \textbf{Recursive multiplications:}
 Let $r(\cdot) = p(\cdot)q(\cdot)$ denote the \emph{product polynomial} with degree $2k-1$. In this step, the algorithm computes the integer products $r(x_i)=p(x_i)q(x_i)$ for $0\leq i \leq 2k-2$, which are \emph{smaller instances} (for opportunely chosen evaluation points) of the original integer multiplication problem.
    \item \textbf{Interpolation:}
        The coefficients $\mathbf{r}[0], \mathbf{r}[1],\ldots,\mathbf{r}[2k-2]$ of the product polynomial $\mathbf{r}$ are computed as $
            \mathbf{r}= \mathbf{V}(\mathbf{x})^{-1}r(\mathbf{x})
        $.
        As the $2k-1$ evaluations points are chosen to be distinct, due to the properties of Vandermonde matrices~\cite{macon1958inverses}, $\mathbf{V}(\mathbf{x})$, is guaranteed to be invertible. 
        \item \textbf{Recomposition:} Compute $C = \sum_{i=0}^{2k-2}r_is^{i{\lceil n/k \rceil}}$ where multiplications by powers of $s^{\lceil n/k \rceil}$ correspond to shifting the $r_i$ by $i\lceil n/k \rceil$ base-$s$ digits to the left.
    \end{enumerate}
    Algorithms in this class compute $\Theta\left(c_k n^{\log_k
(2k-1)}\right)$ digit operations, where $c_k$ is a constant with respect to $n$,
while it grows with $k$. Sometimes (e.g., in Karatsuba's algorithm~\cite{karatsuba1962multiplication}), the value of $p(x_{2k-2})$ (resp., $q(x_{2k-2})$) is replaced with the coefficient $A_{k-1}$ (resp., $B_{k-1}$) of $x^{k-1}$ in $p(\cdot)$ (resp., $q(\cdot)$). Correspondingly, $r(x_{2k-2})=p(x_{2k-2})q(x_{2k-2})$ equals the coefficient of $x^{2k-2}$ in $r(\cdot)$. In this case, the matrices $\mathbf{V(x)}$ and $\mathbf{V(x)}^{-1}$ can be easily modified. 
This requires minor changes in the algorithm~\cite{bodrato2007towards},\cite{cook}. In order to simplify the presentation, in our lower bound analysis, we assume that this special variation is not used. Our proofs can, however, be straightforwardly  extended to include this case, although the required modifications are more tedious than interesting.
    For further details on the Toom-Cook algorithm and its variations, we refer the reader to~\cite{BilardiS19,bodrato2007towards}.
\vspace{3mm}    

Remarkably, the only properties of relevance for the analysis of the \io{} complexity of algorithms in these classes are those used in the characterization of the classes themselves. While implementation details of these algorithms (e.g., the choice of the evaluation points used for Toom-Cook algorithms, the choice on  carry-over management in summations) may impact the execution time, they will not affect our \io{} lower bound analysis. 
This is particularly desirable when analyzing hybrid algorithms due to their intrinsic heterogeneous nature.

In this work, we consider a general class of ``\emph{non-uniform, non-stationary}'' hybrid integer multiplication algorithms, which allow mixing of schemes from the fast Toom-Cook class with algorithms from the standard class. Given an algorithm $\mathcal{A}$ let $P$ denote the problem corresponding to the computation of the product of the input factor integers $A$ and $B$. Consider an ``\emph{instruction function}''  $f_\mathcal{A}(P)$, which, given as input $P$ returns either (a) indication regarding the algorithm from the standard class which is to be used to compute $P$, or (b) indication regarding the value $k$ for the Toom-$k$ algorithm to be used to recursively generate $2k-1$ sub-problems $P_1,P_2,\ldots,P_{2k-1}$ and the instruction functions $f_\mathcal{A}(P_i)$ for each of the $2k-1$ sub-problems, where $k\in \mathbb{N}$ and $k\geq 2$. We denote as $\hmm{}$ the class of non-uniform, non-stationary algorithms which can be characterized by means of such instruction functions and such that 
if Toom-$k$ is used to compute a (sub)problem with input size $n$, the values $k$ and evaluation points are chosen in such a way that each of the $2k-1$ generated sub-problems has input size at most $2n/k< n$ (i.e., the input size of the sub-problems is strictly decreasing).
Algorithms in $\hmm{}$ allow recursive calls to have a different structure, even when
they refer to the multiplication of matrices in the same recursive level (i.e., non-uniform). E.g., some of the sub-problems with the same size may be  computed using algorithms form the standard class, while others may be computed using recursive algorithms from the fast class, even using different versions of Toom-$k$ (i.e., with different values of $k$). 



We also consider a sub-class $\uhmm{}$ of $\hmm{}$ constituted by ``\emph{uniform, non-stationary}'' hybrid algorithms for which the same version of Toom-$k$ (i.e., using the same value $k$) is used in the first $\ell$ recursion levels, before switching to using an algorithm from the standard class for computing sub-problems of size smaller than or equal to $\nz{}$. Algorithms in this class are \emph{uniform}, i.e., sub-problems of the same size are all either recursively computed using a scheme form the fast class, or are all computed using algorithms from the standard class.

\section{The CDAG of algorithms in $\hmm{}$}\label{sec:CDAG}
Let $G^{\mathcal{A}}= (V_\mathcal{A},E_\mathcal{A})$ denote the CDAG that corresponds to an algorithm $\mathcal{A}\in\hmm{}$ used for multiplying  input  integers $A,B$ of size $n$. Rather than considering a single algorithm, we aim to characterize the properties of the CDAGs corresponding to \emph{any} algorithm in the class $\hmm{}$ that will be relevant for the \io{} analysis. Such analysis appears challenging due to the variety and irregularity of algorithms in $\hmm{}$. In this section, we show a general template for the construction of  $G^{\mathcal{A}}$, and we identify some of its properties which, crucially, hold \emph{regardless} of the implementation details specific of $\mathcal{A}$ and, hence, of  $G^{\mathcal{A}}$.\\

\noindent\textbf{Construction:}
$G^{\mathcal{A}}$ is obtained using a recursive construction that mirrors the recursive structure of the algorithm itself. Let $P$ denote the entire integer multiplication problem computed by $\mathcal{A}$. 
Consider the case for which, according to the \emph{instruction function} $f_{\mathcal{A}}(P)$, $P$ is to be computed using an algorithm from the standard class. As we do not fix a specific algorithm, we do not correspondingly have a fixed CDAG. The only feature of interest for the analysis is that, in this case, the CDAG $G^{\mathcal{A}}$  corresponds to the execution of an algorithm from the standard class for input integers of size $n$.

Consider instead the case for which, according to $f_{\mathcal{A}}(P)$, $P$ is to be computed using a Toom-$k$ algorithm by generating the $2k-1$ sub-problems $P_1,P_2,\ldots,P_{2k-1}$ each associated with an \emph{instruction function}. The sub-CDAGs of $G^{\mathcal{A}}$ corresponding to each of the $2k-1$ sub-problems $P_i$, denoted as $G^{\mathcal{A}_{P_i}}$ are constructed according to $f_{\mathcal{A}}(P_i)$, following recursively the previous steps.
$G^{\mathcal{A}}$ can then be constructed by composing the $2k-1$ sub-CDAGs $G^{\mathcal{A}_{P_i}}$ by means of an \emph{encoder sub-CDAG} $Enc_{k,n}$  which is used to connect the input vertices of $G^{\mathcal{A}}$, which correspond to the values of the input integers $A$ (resp., $B$) to the appropriate input vertices of the $2k-1$ sub-CDAGs $G^{\mathcal{A}_{P_i}}$; the output vertices of the sub-CDAGs $G^{\mathcal{A}_{P_i}}$ (which correspond to the outputs of the $2k-1$ sub-products) are connected to the appropriate output vertices of the entire $G^{\mathcal{A}}$ CDAG by means of a \emph{decoder} sub-CDAG $Dec$.  
The input vertices of the sub-CDAG $Enc_{k,n}$ correspond to the $n$ digits in the base-$s$ expression of the input integers $A$ and $B$, while the output vertices of $Enc_{k,n}$ correspond to the digits of the inputs of the $2k-1$ sub-problems generated by the recursive strategy of $\mathcal{A}$. 
We present an example of such recursive construction in Figure~\ref{fig:strarec}.\\

  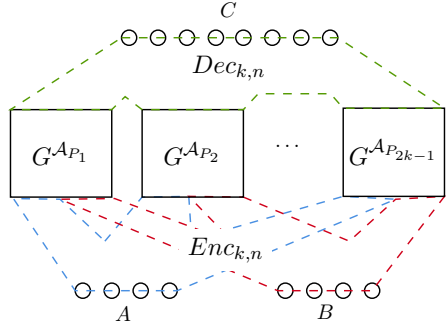
\begin{figure}[bt]
  \centering
     \resizebox{0.4\linewidth}{!}{

\tikzset{every picture/.style={line width=0.75pt}} 

\begin{tikzpicture}[x=0.75pt,y=0.75pt,yscale=-1,xscale=1]

\draw   (80,236.25) .. controls (80,233.35) and (82.35,231) .. (85.25,231) .. controls (88.15,231) and (90.5,233.35) .. (90.5,236.25) .. controls (90.5,239.15) and (88.15,241.5) .. (85.25,241.5) .. controls (82.35,241.5) and (80,239.15) .. (80,236.25) -- cycle ;
\draw   (100,236.25) .. controls (100,233.35) and (102.35,231) .. (105.25,231) .. controls (108.15,231) and (110.5,233.35) .. (110.5,236.25) .. controls (110.5,239.15) and (108.15,241.5) .. (105.25,241.5) .. controls (102.35,241.5) and (100,239.15) .. (100,236.25) -- cycle ;
\draw   (120,236.25) .. controls (120,233.35) and (122.35,231) .. (125.25,231) .. controls (128.15,231) and (130.5,233.35) .. (130.5,236.25) .. controls (130.5,239.15) and (128.15,241.5) .. (125.25,241.5) .. controls (122.35,241.5) and (120,239.15) .. (120,236.25) -- cycle ;
\draw   (140,236.25) .. controls (140,233.35) and (142.35,231) .. (145.25,231) .. controls (148.15,231) and (150.5,233.35) .. (150.5,236.25) .. controls (150.5,239.15) and (148.15,241.5) .. (145.25,241.5) .. controls (142.35,241.5) and (140,239.15) .. (140,236.25) -- cycle ;
\draw   (220,236.25) .. controls (220,233.35) and (222.35,231) .. (225.25,231) .. controls (228.15,231) and (230.5,233.35) .. (230.5,236.25) .. controls (230.5,239.15) and (228.15,241.5) .. (225.25,241.5) .. controls (222.35,241.5) and (220,239.15) .. (220,236.25) -- cycle ;
\draw   (240,236.25) .. controls (240,233.35) and (242.35,231) .. (245.25,231) .. controls (248.15,231) and (250.5,233.35) .. (250.5,236.25) .. controls (250.5,239.15) and (248.15,241.5) .. (245.25,241.5) .. controls (242.35,241.5) and (240,239.15) .. (240,236.25) -- cycle ;
\draw   (260,236.25) .. controls (260,233.35) and (262.35,231) .. (265.25,231) .. controls (268.15,231) and (270.5,233.35) .. (270.5,236.25) .. controls (270.5,239.15) and (268.15,241.5) .. (265.25,241.5) .. controls (262.35,241.5) and (260,239.15) .. (260,236.25) -- cycle ;
\draw   (280,236.25) .. controls (280,233.35) and (282.35,231) .. (285.25,231) .. controls (288.15,231) and (290.5,233.35) .. (290.5,236.25) .. controls (290.5,239.15) and (288.15,241.5) .. (285.25,241.5) .. controls (282.35,241.5) and (280,239.15) .. (280,236.25) -- cycle ;
\draw   (112,61.25) .. controls (112,58.35) and (114.35,56) .. (117.25,56) .. controls (120.15,56) and (122.5,58.35) .. (122.5,61.25) .. controls (122.5,64.15) and (120.15,66.5) .. (117.25,66.5) .. controls (114.35,66.5) and (112,64.15) .. (112,61.25) -- cycle ;
\draw   (132,61.25) .. controls (132,58.35) and (134.35,56) .. (137.25,56) .. controls (140.15,56) and (142.5,58.35) .. (142.5,61.25) .. controls (142.5,64.15) and (140.15,66.5) .. (137.25,66.5) .. controls (134.35,66.5) and (132,64.15) .. (132,61.25) -- cycle ;
\draw   (152,61.25) .. controls (152,58.35) and (154.35,56) .. (157.25,56) .. controls (160.15,56) and (162.5,58.35) .. (162.5,61.25) .. controls (162.5,64.15) and (160.15,66.5) .. (157.25,66.5) .. controls (154.35,66.5) and (152,64.15) .. (152,61.25) -- cycle ;
\draw   (172,61.25) .. controls (172,58.35) and (174.35,56) .. (177.25,56) .. controls (180.15,56) and (182.5,58.35) .. (182.5,61.25) .. controls (182.5,64.15) and (180.15,66.5) .. (177.25,66.5) .. controls (174.35,66.5) and (172,64.15) .. (172,61.25) -- cycle ;
\draw   (191,61.25) .. controls (191,58.35) and (193.35,56) .. (196.25,56) .. controls (199.15,56) and (201.5,58.35) .. (201.5,61.25) .. controls (201.5,64.15) and (199.15,66.5) .. (196.25,66.5) .. controls (193.35,66.5) and (191,64.15) .. (191,61.25) -- cycle ;
\draw   (211,61.25) .. controls (211,58.35) and (213.35,56) .. (216.25,56) .. controls (219.15,56) and (221.5,58.35) .. (221.5,61.25) .. controls (221.5,64.15) and (219.15,66.5) .. (216.25,66.5) .. controls (213.35,66.5) and (211,64.15) .. (211,61.25) -- cycle ;
\draw   (231,61.25) .. controls (231,58.35) and (233.35,56) .. (236.25,56) .. controls (239.15,56) and (241.5,58.35) .. (241.5,61.25) .. controls (241.5,64.15) and (239.15,66.5) .. (236.25,66.5) .. controls (233.35,66.5) and (231,64.15) .. (231,61.25) -- cycle ;
\draw   (251,61.25) .. controls (251,58.35) and (253.35,56) .. (256.25,56) .. controls (259.15,56) and (261.5,58.35) .. (261.5,61.25) .. controls (261.5,64.15) and (259.15,66.5) .. (256.25,66.5) .. controls (253.35,66.5) and (251,64.15) .. (251,61.25) -- cycle ;
\draw  [color={rgb, 255:red, 74; green, 144; blue, 226 }  ,draw opacity=1 ][dash pattern={on 4.5pt off 4.5pt}] (100.5,202) -- (126.25,171.5) -- (158.5,171) -- (160.5,201) -- (265.25,171) -- (301.5,173) -- (150.5,236.25) -- (80,236.25) -- (37.25,172.5) -- (69.5,173) -- cycle ;
\draw  [color={rgb, 255:red, 208; green, 2; blue, 27 }  ,draw opacity=1 ][dash pattern={on 4.5pt off 4.5pt}] (196.25,171.5) -- (269.5,202) -- (301.5,173) -- (335.25,171) -- (285.25,236.25) -- (220,236.25) -- (69.5,173) -- (107.25,172.5) -- (189.5,203) -- (158.5,171) -- cycle ;
\draw   (35.25,111) -- (105.25,111) -- (105.25,171.5) -- (35.25,171.5) -- cycle ;
\draw   (265.25,110) -- (335.25,110) -- (335.25,171) -- (265.25,171) -- cycle ;
\draw   (126.25,111) -- (196.25,111) -- (196.25,171.5) -- (126.25,171.5) -- cycle ;
\draw  [color={rgb, 255:red, 85; green, 155; blue, 4 }  ,draw opacity=1 ][dash pattern={on 4.5pt off 4.5pt}] (112,61.25) -- (261.5,61.25) -- (335.25,110) -- (265.25,110) -- (256.5,100) -- (206.5,100) -- (196.25,111) -- (126.25,111) -- (114.5,102) -- (105.25,111) -- (35.25,111) -- cycle ;

\draw (113,252) node  [align=left] {$\displaystyle A$};
\draw (253,251) node  [align=left] {$\displaystyle B$};
\draw (186,42) node  [align=left] {$\displaystyle C$};
\draw (186,82) node [scale=1.2] [align=left] {$\displaystyle Dec_{k,n}$};
\draw (184,208) node [scale=1.2] [align=left][fill=white] {$\displaystyle Enc_{k,n}$};
\draw (228,136) node  [align=left] {$\displaystyle \dotsc $};
\draw (70.25,141.25) node [scale=1.2]  {$G^{\mathcal{A}_{{P}_{1}}}$};
\draw (161.25,141.25) node [scale=1.2]  {$G^{\mathcal{A}_{{P}_{2}}}$};
\draw (300.25,140.5) node [scale=1.2]  {$G^{\mathcal{A}_{{P}_{2k-1}}}$};

\end{tikzpicture}

     }
     \caption{Simplified recursive construction of $G^{\mathcal{A}}$. The blue (resp., red) dotted lines represent the combination of the digits of $A$ (resp., $B$) as discussed in the \emph{Evaluation} step of the Toom-$k$ algorithm to obtain the input values to the $2k-1$ generated sub-problems. The green dotted lines represent the computation of the product $C$ by composing the results of the sub-problems as discussed in the \emph{Interpolation} step of  Toom-$k$.
    }
  \label{fig:strarec}
\end{figure} 
\noindent\textbf{Properties of $G^{\mathcal{A}}$:}
While the exact internal structure of $G^{\mathcal{A}}$, and, in particular, the structure of encoder and decoder sub-CDAGs depends on the exact implementation details of the specific Toom-Cook algorithm being used by $\mathcal{A}$ (e.g., the choice of evaluation points, how to deal with carryover), \emph{all} versions share the following properties, originally discussed in~\cite{BilardiS19}, which are of great importance for the remainder of the analysis. We refer the reader to~\cite{BilardiS19} for the proofs.

\begin{lemma}{\cite[Lemma 3.1]{BilardiS19}}\label{lem:distinct}
Let $\mathcal{A}\in \hmm{}$ and denote an algorithm that recursively computes the product of two input integers $A$ and $B$, whose base-$s$ expansions have $n$ digits each. Let $P_1$ and $P_2$ be any two sub-problems generated by $\mathcal{A}$ using  algorithms from the Toom-Cook class, such that $P_2$ is not recursively generated by $P_1$ and vice versa. Then, the sub-CDAG of $G^{\mathcal{A}}$ corresponding to $P_1$ and the sub-CDAG corresponding to $P_2$ are vertex disjoint. 
\end{lemma}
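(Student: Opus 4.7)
The plan is to proceed by induction on the recursion tree of $\mathcal{A}$ implicit in the construction of $G^{\mathcal{A}}$. Let $T_\mathcal{A}$ denote this tree: its root corresponds to the original product $P$, its internal nodes are exactly those sub-problems resolved via a Toom-$k$ call (each having $2k-1$ children, determined by $f_\mathcal{A}$), and its leaves are those computed by an algorithm from the standard class. Every node $Q$ of $T_\mathcal{A}$ is associated, via the recursive construction of Section~\ref{sec:CDAG}, with a sub-CDAG $G^{\mathcal{A}_Q}$ of $G^{\mathcal{A}}$.

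The hypothesis that neither $P_1$ nor $P_2$ is recursively generated by the other is exactly the statement that $P_1$ and $P_2$ are incomparable in $T_\mathcal{A}$. Let $P^{*}$ be their deepest common ancestor. Then $P^{*}$ is a strict ancestor of both $P_1$ and $P_2$, hence an internal node, and hence a Toom-$k$ sub-problem (standard-class leaves generate no further sub-problems). Let $Q_i \neq Q_j$ be the two distinct children of $P^{*}$ such that $P_1$ lies in the subtree rooted at $Q_i$ and $P_2$ lies in the subtree rooted at $Q_j$. It follows from the recursive construction that $G^{\mathcal{A}_{P_1}}$ is a sub-CDAG of $G^{\mathcal{A}_{Q_i}}$ and $G^{\mathcal{A}_{P_2}}$ is a sub-CDAG of $G^{\mathcal{A}_{Q_j}}$, so it suffices to show that $G^{\mathcal{A}_{Q_i}}$ and $G^{\mathcal{A}_{Q_j}}$ are vertex-disjoint.

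For this, the base claim to establish is: for any Toom-$k$ node $P^{*}$ with children $Q_1,\ldots,Q_{2k-1}$, the sub-CDAGs $G^{\mathcal{A}_{Q_1}},\ldots,G^{\mathcal{A}_{Q_{2k-1}}}$ are pairwise vertex-disjoint. This is essentially definitional: by construction, $Enc_{k,n}$ allocates to each child $Q_\ell$ a distinct collection of output vertices (the digits of the Vandermonde combinations $p(x_\ell),q(x_\ell)$ that form the inputs of $Q_\ell$), and the sub-CDAGs $G^{\mathcal{A}_{Q_\ell}}$ are then grown recursively, and independently of one another, starting from these disjoint root sets. A straightforward nested induction on the height of the subtree rooted at each $Q_\ell$ shows that no internal vertex introduced during the recursive expansion of $G^{\mathcal{A}_{Q_i}}$ can coincide with one introduced in $G^{\mathcal{A}_{Q_j}}$, because each recursive step operates only on vertices produced inside its own sub-construction.

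The principal obstacle is of a bookkeeping rather than combinatorial nature: one must carefully verify that the template of Section~\ref{sec:CDAG} really does produce independent copies of sibling sub-CDAGs, with no implicit vertex sharing via the encoder or decoder. Concretely, although the input vertices of $Enc_{k,n}$ (the digits of $A$ and $B$) are common to all children, its output vertices — which are the ones that belong to the respective $G^{\mathcal{A}_{Q_\ell}}$ as their input vertices — are disjoint across children by construction, and symmetrically the decoder $Dec_{k,n}$ merges from disjoint per-child output vertices. Once this construction-level disjointness is formalized, the induction described above closes, and the conclusion that $G^{\mathcal{A}_{P_1}}$ and $G^{\mathcal{A}_{P_2}}$ are vertex-disjoint follows.
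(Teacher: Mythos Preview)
The paper does not actually prove this lemma: it is stated with an explicit citation to~\cite[Lemma~3.1]{BilardiS19}, and the surrounding text reads ``We refer the reader to~\cite{BilardiS19} for the proofs.'' So there is no in-paper proof to compare against.

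That said, your argument is correct and is the natural one given the recursive construction of $G^{\mathcal{A}}$ in Section~\ref{sec:CDAG}. Reducing to the nearest common ancestor $P^{*}$ and then to the disjointness of sibling sub-CDAGs is exactly the right move, and your observation that the only shared structure among siblings lives in the encoder/decoder---whose \emph{outputs} (resp., \emph{inputs}) are, by construction, allocated disjointly to the $2k-1$ children---is the crux. The one point worth tightening is the ``straightforward nested induction'': make explicit that every non-input vertex of $G^{\mathcal{A}_{Q_\ell}}$ is freshly created during the recursive expansion of $Q_\ell$ (the construction in Section~\ref{sec:CDAG} composes the $2k-1$ sub-CDAGs as separate copies and only glues them at the encoder outputs and decoder inputs), so distinct sibling expansions introduce disjoint vertex sets by construction rather than by any combinatorial argument. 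With that made precise, the proof is complete.
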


The following lemma captures a connectivity property  shared by the encoder sub-CDAGs for \emph{all} Toom-$k$ algorithms:

\begin{lemma}[{\cite[Lemma 3.2]{BilardiS19}}]\label{lem:conneconder}
For $k\geq 2$, given the encoder $Enc_{k,n}$, let $I'_A$  (resp., $I'_B$) denote the subset of the input vertices that correspond to the $\lceil n/k \rceil$ least significant digits in the base-$s$ expansion of $A$ (resp., $B$). Further, for $0\leq i \leq 2k-2$, let $O^{(i)}_A$ (resp., $O^{(i)}_B$) denote the subset of the output vertices of $Enc_{k,n}$ that correspond to the $\lceil n/k \rceil$ least significant digits of the input of the $i$-th sub-problem obtained from the values of $A$ (resp., $B$). 

For any $0\leq i \leq 2k-2$, consider $Y\subseteq O^{(i)}_A$ (resp., $\subseteq O^{(i)}_B$). We then have that there exists $X_A\subseteq I'_A$ (resp., $X_B\subseteq I'_B$) with $|X_A| = |Y|$ (resp., $|X_b| = |Y|$), such that there exist $|Y|$ vertex disjoint paths connecting the vertices in $X_A$ (resp., $X_B$) to the vertices in $Y$.
\end{lemma}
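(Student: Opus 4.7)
The plan is to establish the stronger statement that $Enc_{k,n}$ admits $\lceil n/k \rceil$ vertex-disjoint paths from $I'_A$ to $O^{(i)}_A$; since both sets have cardinality exactly $\lceil n/k \rceil$, such a family induces a bijection between them, and the lemma then follows for any $Y \subseteq O^{(i)}_A$ by retaining only the $|Y|$ paths whose endpoints lie in $Y$ and taking $X_A$ to be their starting vertices. The $B$-case is fully symmetric, so I focus on the $A$-case throughout.

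By Menger's theorem applied to the encoder sub-CDAG with source set $I'_A$ and sink set $O^{(i)}_A$, establishing $\lceil n/k \rceil$ vertex-disjoint paths is equivalent to showing that every dominator of $O^{(i)}_A$ with respect to $I'_A$ has size at least $\lceil n/k \rceil$. I plan to prove this lower bound on dominators via a Grigoriev-style counting argument, in the spirit of the proof of Lemma~\ref{cor:flowcor}. Fix an arbitrary assignment to the inputs of $Enc_{k,n}$ outside $I'_A$ --- that is, fix the digits of $A_1, \ldots, A_{k-1}$ and all digits of $B$. Assuming $x_i \neq 0$, we have $p(x_i) = A_0 + R_i$ where $R_i = \sum_{j=1}^{k-1} x_i^j A_j$ is a constant under this fixing, so the induced map $A_0 \mapsto (A_0 + R_i) \bmod s^{\lceil n/k \rceil}$ is a translation on $\mathbb{Z}/s^{\lceil n/k \rceil}\mathbb{Z}$ and hence a bijection. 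Consequently, the joint digit configuration at $O^{(i)}_A$ assumes all $s^{\lceil n/k \rceil}$ possible values as $A_0$ ranges over $\{0, 1, \ldots, s^{\lceil n/k \rceil} - 1\}$. For any dominator $\dom$, every value-flow from $I'_A$ to $O^{(i)}_A$ must pass through $\dom$, so the joint configuration at $\dom$ determines the outputs at $O^{(i)}_A$; since each vertex of $\dom$ stores a single base-$s$ digit, $\dom$ supports at most $s^{|\dom|}$ distinct configurations, giving $s^{|\dom|} \geq s^{\lceil n/k \rceil}$ and hence $|\dom| \geq \lceil n/k \rceil$.

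The degenerate case $x_i = 0$ gives $p(0) = A_0$ directly, so the encoder either collapses $O^{(i)}_A$ onto $I'_A$ or connects them via identity paths, making the vertex-disjoint paths trivial. The main obstacle I anticipate lies in the counting step: under parsimonious execution schedules that may recompute intermediate values, one must carefully argue that the joint configuration at $\dom$ at appropriate moments along any feasible execution suffices to determine the outputs at $O^{(i)}_A$. This is the technical heart of the Grigoriev-flow dominator machinery of~\cite{jia1981complexity,BilardiS19}, and I would invoke the corresponding results --- specifically the argument underlying~\cite[Lemma 1]{BilardiS19} --- rather than reprove them from scratch. Finally, the Karatsuba-style special variation mentioned earlier (replacing $p(x_{2k-2})$ with $A_{k-1}$) does not disturb the argument, since the $A_0$-dependence of the relevant outputs, and thus the bijectivity used in the counting step, is preserved.
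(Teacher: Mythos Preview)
The paper does not prove this lemma; it is quoted verbatim from~\cite{BilardiS19} and the surrounding paragraph explicitly defers the proof there (``We refer the reader to~\cite{BilardiS19} for the proofs''). So there is no in-paper argument to compare against. Your Menger-plus-counting strategy is exactly the flavour of argument that~\cite{BilardiS19} uses throughout, and it is also the style of the one dominator proof that \emph{does} appear in this paper's appendix (Lemma~\ref{lem:domtype3}); your proposal is essentially correct.

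Two small repairs. First, replace ``arbitrary assignment'' by a specific one: under the paper's sign-plus-magnitude representation, if $R_i$ is a small negative integer the map $A_0\mapsto |A_0+R_i|\bmod s^{\lceil n/k\rceil}$ is \emph{not} a bijection (values near zero are hit twice, once with each sign). Taking $A_1=\cdots=A_{k-1}=0$ gives $R_i=0$ and $p(x_i)=A_0$ for every $i$, which also absorbs your $x_i=0$ case. Second, your closing worry about parsimonious schedules and recomputation is misplaced: the assertion ``the configuration at $\dom$ determines the values at $O^{(i)}_A$'' is a purely structural fact about the CDAG (every predecessor of an output that depends on $I'_A$ must itself be dominated by $\dom$, by a one-line induction on topological order), and has nothing to do with any execution schedule. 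You do not need~\cite[Lemma~1]{BilardiS19} here; Definition~\ref{def:dominator} suffices.
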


\section{Maximal sub-problems and their properties}\label{sec:msp}
 For an algorithm $\mathcal{A}\in \hmm{}$, let $P'$ denote a sub-problem generated by $\mathcal{A}$. 
In our presentation we consider the entire computation of the product $C=A\times B$ as an \emph{improper} sup-problem generated by $\mathcal{A}$. 
For a given sub-problem $P'$ generated by an algorithm $\mathcal{A}\in \hmm{}$, let  $P'_1, P'_2,\ldots,P'_i$ be the sequence of sub-problems  generated by $\mathcal{A}$ such that $P'_{j+1}$ was recursively generated to compute $P'_j$ for $j=1,2,\ldots,i-1$, and such that $P'$ was recursively generated to compute $P'_i$. We refer to the sub-problems in such a sequence as the \emph{ancestor sub-problems} of $P'$. Clearly, if $P'$ is the entire integer multiplication problem being computed by $\mathcal{A}$, it has no ancestor sub-problems.

To study the \io{} complexity of algorithms in $\hmm{}$, we focus on the analysis of a particular group of sub-problems.

\begin{definition}[Integer Multiplication Maximal Sub-Problems (MSP)]\label{def:maxsubp}
Let $\mathcal{A} \in \hmm{}$ be an algorithm used to multiply two integers $A,B$ whose base-$s$ expansion has at most $n$ digits. If $n< n'$ we say that $\mathcal{A}$ does not generate any $n'$-Maximal Sub-Problem (MSP). 

If $n\geq n'$, let $P_i$ be a sub-problem generated by $\mathcal{A}$ with input integers of size $n_i\geq n'$, and such that all its ancestors sub-problems are computed, according to $\mathcal{A}$ using algorithms from the Toom-Cook class. We say that:
\begin{itemize}
    \item $P_i$ is a Type 1 $n'$-MSP of $\mathcal{A}$ if, according to $f_{\mathcal{A}}(P)$, it is computed using an algorithm from the standard class. If the entire problem is solved using an algorithm from the standard class, we say that the entire problem is the unique (improper) Type 1 $n'$-MSP generated by $\mathcal{A}$.
    \item $P_i$ is a Type 2 $n'$-MSP of $\mathcal{A}$ if, according to $f_{\mathcal{A}}(P)$, it is computed by  generating $2k-1$ sub-problems  using to the recursive Toom-$k$ scheme (where $k$ is determined by the instruction function $f_{\mathcal{A}}(P)$), and if the input size of all generated sub-problems is strictly smaller than $n'$. If the entire problem uses a recursive algorithm from the Toom-Cook class to generate sub-problems with input size smaller than $n'$, we say that the entire problem is the unique (improper) Type 2 $n'$-MSP generated by $\mathcal{A}$.
\end{itemize}
\end{definition}
In the following we denote as $\nu_1(n')$ (resp., $\nu_2(n')$) the number of Type 1 (resp., Type 2) $n'$-MSPs generated by $\mathcal{A}$, and $\nu(n')= \nu_1(n')+\nu_2(n')$.
For any given value $n'\in \mathbb{N}^+$, let $P_i$ denote the $i$-th $n'$-MSP generated by $\mathcal{A}$ and let $G^{\mathcal{A}}_{P_i}$ denote the corresponding sub-CDAG of $G^{\mathcal{A}}$. We denote as $A_i$ and $B_i$ (resp., $C_i$) the input  integers (resp., the output product) of $P_i$.\\

\noindent\textbf{Properties of MSPs and their corresponding sub-CDAGs:}  By Definition~\ref{def:maxsubp}, we have that for each pair of distinct $n'$-MSPs $P_1$ and $P_2$, $P_2$ is not recursively generated by $\mathcal{A}$ in order to compute $P_1$ or vice versa. Hence, by Lemma~\ref{lem:distinct}, the sub-CDAGs of $G^\mathcal{A}$ that correspond each to one of the $n'$-MSPs generated by $\mathcal{A}$ are vertex-disjoint.
In general, different $n'$-MSPs may have different input sizes.
For a given $n'$-MSP $P_i$ which multiplies input integers of size\ $n_i$, we denote as $\subpInput{}_i$ (resp., $\subpOutput{}_i$) the set of input (resp., output) vertices of the sub-CDAG $G^{\mathcal{A}}_{P_i}$ which correspond each to one of the $\lceil n_i/2\rceil$ least significant digits of the input integers $A_i$ and $B_i$ (resp., to one of the $\lceil n_i/2\rceil$ digits of the output product $C_i$ from the $(\lceil n_i/2\rceil+1)$-least significant to the $n_i$-least significant).  Finally, we define $\subpInput{} = \cup_{i=1}^{\nu(n')} \subpInput{}_i$ and $\subpOutput{} = \cup_{i=1}^{\nu(n')} \subpOutput{}_i$.


For each Type 1 $n'$-MSP $P_i$ generated by $\mathcal{A}$, with input integers of  size $n_i\geq n'$, we denote as $T_i$ the set of variables whose value correspond to the $\lceil n_i^2/4 \rceil$ elementary products  $A_i[j]B_i[k]$ for $j,k = 0,1,\ldots,\left(n_i-1\right)/2$ and $n_i/4\leq j+k\leq 3n_i/4$. By definition, these elementary products are computed by any integer multiplication algorithm from the standard class towards computing the output digits from the $\left(\halfsubpsize{}+1\right)$-least significant up to the $\subpsize$-th least significant one (i.e., the values corresponding to vertices in $\subpOutput{}_i$).
Further, we denote as $\mathcal{T}_i$ the set of vertices corresponding to the variables in $T_i$, and we define $\mathcal{T}= \cup_{i=1}^{\nu_1(n')} \mathcal{T}_i$.  Clearly, $|\mathcal{T}_i|\geq n_i^2/4\geq (n')^2/4$, and $|\mathcal{T}|\geq \sum_{i=1}^{\nu_i(n')}n_i^2/4\geq \nu_1(n')(n')^2/4$.

In order to obtain our \io{} lower bound for algorithms in $\hmm{}$, we characterize properties regarding the minimum dominator size of an arbitrary subset of $\subpInput{}$, $\subpOutput{}$, and $\mathcal{T}$. 
The proofs for the lemmas in this section are presented in Appendix~\ref{app:proof lemma}. 


\begin{lemma}\label{lem:domtype3}
For any Type 1 $n'$-MSPs generated by $\mathcal{A}$ consider $T'_i\subseteq T_i$.
Let $\subpInput{}_{A_i}\subseteq \subpInput{}_i$ (resp., $\subpInput{}_{B_i}\subseteq \subpInput{}_i$) denote a subset of the vertices corresponding to entries of $A_i$ (resp., $B_i$) which are multiplied in at least one of the elementary products in $T'_i$. Then any dominator $D$ of the vertices corresponding to $T'_i$ with respect to the the vertices in $\subpInput{}_i$ is such that
$$
    |D|\geq \max \{ |\subpInput{}_{A_i}|,|\subpInput{}_{B_i}|\}. 
$$
\end{lemma}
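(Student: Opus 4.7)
The plan is to exhibit, for each element of $\subpInput{}_{A_i}$, a length-two directed path into $T'_i$ that any dominator must cut, and then to observe that these short paths involve pairwise disjoint vertex pairs, so cutting each of them contributes a fresh element to $D$. The essential leverage is the defining property of the standard class: every elementary product $A_i[j]B_i[k]$ is \emph{directly} computed, so in $G^{\mathcal{A}}_{P_i}$ the vertex representing $A_i[j]B_i[k]$ is an internal vertex whose two incoming edges originate exactly at the input vertices corresponding to $A_i[j]$ and $B_i[k]$.

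First, I would fix an arbitrary input vertex $a \in \subpInput{}_{A_i}$. By the very definition of $\subpInput{}_{A_i}$, there exists at least one elementary product $p_a \in T'_i$ that takes the variable represented by $a$ as a factor; pick one such $p_a$ arbitrarily. Because $p_a$ is directly computed, the edge $(a, p_a)$ lies in $E_{\mathcal{A}}$, so the two-vertex sequence $(a, p_a)$ is a directed path from a vertex in $\subpInput{}_i$ to a vertex in $T'_i$. Any dominator $D$ of $T'_i$ with respect to $\subpInput{}_i$ must therefore contain either $a$ or $p_a$.

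Next, I would show that the family $\{\{a, p_a\} : a \in \subpInput{}_{A_i}\}$ consists of pairwise disjoint two-element sets. Distinct choices $a \neq a'$ in $\subpInput{}_{A_i}$ are distinct input vertices by construction. The selected product vertices $p_a$ and $p_{a'}$ are also distinct, since each elementary product $A_i[j]B_i[k]$ is associated with a unique pair $(j,k)$ and $a$ (resp.\ $a'$) uniquely fixes the $A_i$-side index. Finally, a vertex representing an elementary product is an internal vertex of $G^{\mathcal{A}}_{P_i}$ (it has in-degree two), so it cannot coincide with any input vertex. Consequently, selecting one member of each pair $\{a, p_a\}$ produces $|\subpInput{}_{A_i}|$ distinct vertices, all of which must belong to $D$, yielding $|D| \geq |\subpInput{}_{A_i}|$.

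Repeating the argument verbatim with $B_i$ in place of $A_i$ then gives $|D| \geq |\subpInput{}_{B_i}|$, so $|D| \geq \max\{|\subpInput{}_{A_i}|, |\subpInput{}_{B_i}|\}$ as required. I do not anticipate a genuine obstacle; the only point deserving care is ensuring that the argument does not accidentally rely on the existence of \emph{only} the short edge from $a$ to $p_a$ (any algorithm in the class may introduce additional longer paths from other inputs into $p_a$), but since the path of length one already exists, the dominator must cut it regardless of the other paths in the CDAG, which is all that is needed.
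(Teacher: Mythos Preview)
Your argument is correct and complete. It is, however, genuinely different from the paper's proof.

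The paper proves the bound via an information-theoretic counting argument: fix the $B_i$-side inputs so that each relevant digit equals~$1$; then each $a\in\subpInput_{A_i}$ is literally the value of some elementary product in $T'_i$, so the tuple of values in $T'_i$ ranges over at least $s^{|\subpInput_{A_i}|}$ possibilities as the $A_i$-side inputs vary. On the other hand, since $D$ dominates $T'_i$ with respect to $\subpInput_{A_i}$, those values are determined by the (fixed) remaining inputs together with the values on $D$, so they can assume at most $s^{|D|}$ configurations. Comparing exponents gives $|D|\geq|\subpInput_{A_i}|$.

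Your route instead exploits the concrete CDAG structure guaranteed by the standard class: because each elementary product is \emph{directly} computed, there is an edge $(a,p_a)$, and the length-one paths you build are vertex-disjoint, so $D$ must hit each of them in a distinct vertex. This is more elementary and entirely avoids the entropy-style reasoning. The trade-off is that your proof leans on the ``directly computed'' clause to get the edge $(a,p_a)$; the paper's argument would still go through for any algorithm in which the elementary-product values are merely \emph{determined} by the corresponding input digits, even absent a direct edge. In the present setting both approaches are equally valid, and yours is arguably cleaner.
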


\begin{lemma}\label{lem:newconnect}
Let $G^\mathcal{A}$ be the CDAG corresponding to an algorithm $\mathcal{A}\in\hmm{}$ which generates $\nu(n')>0$ $n'$-MSPs. Let $P_i$ be one of such $n'$-MSPs.
For any  $Y\subseteq \subpInput{}_i$, any dominator set $D$ of $Y$ satisfies $|D|\geq |Y|$.
\end{lemma}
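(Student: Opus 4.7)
The plan is to exhibit $|Y|$ pairwise vertex-disjoint directed paths in $G^{\mathcal{A}}$ from the set $I$ of original input vertices (the base-$s$ digits of $A$ and $B$) to $Y$. Any dominator $\dom{}$ of $Y$ with respect to $I$ must then meet each such path in a distinct vertex, so $|\dom{}| \geq |Y|$ by the usual Menger-type argument of the flavour implicit in~\cite{jia1981complexity,BilardiS19}. The whole task is therefore reduced to constructing such a system of paths.

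To do so I would prove, by induction on the recursion depth of the sub-problem in $\mathcal{A}$, the following mild strengthening of the lemma: \emph{for every sub-problem $Q$ of $\mathcal{A}$ all of whose proper ancestors are computed via Toom-Cook, and every $W$ contained in the set of input vertices of $G^{\mathcal{A}}_{Q}$ associated with the $\lceil n_Q/2\rceil$ least significant digits of $Q$'s two operands, there exist $|W|$ vertex-disjoint paths from $I$ to $W$ in $G^{\mathcal{A}}$.} Since $P_i$ is an MSP, Definition~\ref{def:maxsubp} guarantees its ancestors are all Toom-Cook, and the lemma follows by applying the strengthened claim with $Q = P_i$ and $W = Y$. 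The base case is immediate: at depth $0$, $Q$ coincides with the whole problem, its inputs belong to $I$, and each vertex of $W$ is its own trivial length-$0$ path.

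In the inductive step, let $Q'$ denote $Q$'s parent; by hypothesis $Q'$ is a Toom-$k$ sub-problem of some size $n'$, and the input vertices of $G^{\mathcal{A}}_{Q}$ are outputs of its encoder $Enc_{k,n'}$. Partition $W = W_A \sqcup W_B$ according to whether each vertex corresponds to a digit of $Q$'s $A$- or $B$-operand. Because $n_Q \leq 2n'/k$ forces $\lceil n_Q/2 \rceil \leq \lceil n'/k \rceil$, the sets $W_A$ and $W_B$ are contained in the output sets $O^{(j)}_A$ and $O^{(j)}_B$ of Lemma~\ref{lem:conneconder}, where $j$ indexes $Q$ among $Q'$'s sub-problems. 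Applying Lemma~\ref{lem:conneconder} to each side produces subsets $X_A \subseteq I'_A$ and $X_B \subseteq I'_B$ with $|X_A| = |W_A|$, $|X_B| = |W_B|$, together with correspondingly many vertex-disjoint paths inside $Enc_{k,n'}$ from $X_A$ to $W_A$ and from $X_B$ to $W_B$. Because the evaluation step computes $p(\mathbf{x}) = \mathbf{V}(\mathbf{x})\mathbf{A}$ and $q(\mathbf{x}) = \mathbf{V}(\mathbf{x})\mathbf{B}$ from disjoint inputs through disjoint intermediates, the $A$- and $B$-side paths do not interfere, yielding $|W|$ vertex-disjoint paths from $X_A \cup X_B$ to $W$ inside the encoder. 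Noting that $X_A \cup X_B$ lies in the analogue of $\subpInput{}$ for $Q'$ (since $\lceil n'/k \rceil \leq \lceil n'/2 \rceil$ for $k \geq 2$), I would then invoke the inductive hypothesis on $Q'$, whose ancestors form a prefix of those of $Q$ and hence are all Toom-Cook.

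The main obstacle is preserving vertex-disjointness under the concatenation. I would prune each inductively supplied path from $I$ to $X_A \cup X_B$ at its first entry into $G^{\mathcal{A}}_{Q'}$, which is always possible in a DAG, so that the pruned path avoids the interior of $G^{\mathcal{A}}_{Q'}$ and in particular $Enc_{k,n'}$. Lemma~\ref{lem:distinct}, applied at every ancestor level, ensures that the sub-CDAGs through which the pruned paths travel are vertex-disjoint from $Enc_{k,n'}$; concatenating them with the encoder paths, which meet them only at the common endpoints in $X_A \cup X_B$, yields the required $|W|$ vertex-disjoint paths from $I$ to $W$ and closes the induction.
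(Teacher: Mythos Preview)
Your proposal is correct and follows essentially the same strategy as the paper's proof: induction on the recursion depth of $P_i$, invoking Lemma~\ref{lem:conneconder} to thread vertex-disjoint paths through one encoder layer and then composing with the inductively obtained paths to reach the global inputs. The only difference is organizational---the paper peels off the \emph{outermost} encoder $Enc_{k,n}$ and applies the inductive hypothesis to $P_i$ viewed inside the first-level sub-problem $P^{(1)}$, whereas you peel off the \emph{innermost} encoder (between $Q$ and its parent $Q'$) and recurse on $Q'$; your handling of the $A$/$B$ separation, the check $\lceil n_Q/2\rceil \le \lceil n'/k\rceil$, and the concatenation disjointness is in fact more explicit than the paper's.
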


By composing the result of Lemma~\ref{lem:newconnect} and an analysis of the \emph{Partial Grigoriev's flow} of the integer multiplication function~\cite{BilardiS19} we obtain the following result:

\begin{lemma}\label{lem:domtype1}
Let $G^\mathcal{A}$ be the CDAG corresponding to an algorithm $\mathcal{A}\in\hmm{}$ which generates $\nu(n')$  $n'$-MSPs. Given any subset $Z\subseteq \subpOutput{}$ in $G^\mathcal{A}$ with $|Z|\leq n'/2$, any dominator set $\dom$ of $Z$ satisfies $|\dom|\geq|Z|/2$.
\end{lemma}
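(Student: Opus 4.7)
The strategy is to partition $Z$ among the MSPs it touches, derive a local dominator bound inside each MSP by invoking Lemma~\ref{cor:flowcor} on the corresponding sub-CDAG, and aggregate those bounds by charging the ``upstream'' cost back to $\dom$ via Lemma~\ref{lem:newconnect}. Write $Z_i := Z \cap \subpOutput{}_i$, giving a disjoint decomposition $Z = \bigcup_i Z_i$ with $|Z| = \sum_i |Z_i|$; the hypothesis $|Z| \leq n'/2$ combined with $n_i \geq n'$ ensures $|Z_i| \leq n_i/2$ for each $i$, so that the middle-digit setup of Lemma~\ref{cor:flowcor} applies non-vacuously inside every $G^{\mathcal{A}}_{P_i}$.

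\textbf{Local bound inside each MSP.} Fix $P_i$ with $Z_i \neq \emptyset$ and let $X_A^i \subseteq \subpInput{}_i$ be the least-significant-half $A_i$-vertices. I would define $R_i^A$ as the subset of $X_A^i$ reachable from some global input by a path avoiding $\dom$, and form $\hat{D}_i := (\dom \cap V(G^{\mathcal{A}}_{P_i})) \cup (X_A^i \setminus R_i^A)$. A ``prepend-an-avoiding-path'' argument shows that $\hat{D}_i$ dominates $Z_i$ with respect to $X_A^i$ in the sub-CDAG $G^{\mathcal{A}}_{P_i}$, which is itself an integer-multiplication CDAG on $n_i$-digit operands. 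Applying Lemma~\ref{cor:flowcor} to this sub-CDAG with $X' = X_A^i$ and $Y' = Z_i$ yields $|\hat{D}_i| \geq |X_A^i|\,|Z_i|/n_i \geq |Z_i|/2$. Setting $X_A^{*,i} := (X_A^i \setminus R_i^A) \setminus \dom$ and noting that $X_A^i \cap \dom \subseteq \dom \cap V(G^{\mathcal{A}}_{P_i})$, a short inclusion--exclusion gives the clean identity $|\hat{D}_i| = |\dom \cap V(G^{\mathcal{A}}_{P_i})| + |X_A^{*,i}|$, so $|\dom \cap V(G^{\mathcal{A}}_{P_i})| + |X_A^{*,i}| \geq |Z_i|/2$ for every $i$.

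\textbf{Aggregation across MSPs.} Summing the per-MSP inequality over $i$ and using vertex-disjointness of the sub-CDAGs (Lemma~\ref{lem:distinct}) to identify $\sum_i |\dom \cap V(G^{\mathcal{A}}_{P_i})|$ with $|\dom \cap \bigcup_i V(G^{\mathcal{A}}_{P_i})|$ yields $|\dom \cap \bigcup_i V(G^{\mathcal{A}}_{P_i})| + \sum_i |X_A^{*,i}| \geq |Z|/2$. Every $v \in X_A^{*,i}$ lies outside $\dom$ yet is fully dominated from the global inputs; since the sub-CDAGs are pairwise vertex-disjoint, the $\dom$-vertex blocking each such path must live in $\dom_{\mathrm{up}} := \dom \setminus \bigcup_j V(G^{\mathcal{A}}_{P_j})$. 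Hence $\dom_{\mathrm{up}}$ is a dominator of $\bigcup_i X_A^{*,i} \subseteq \bigcup_i \subpInput{}_i$. An extension across MSPs of Lemma~\ref{lem:newconnect} will then give $|\dom_{\mathrm{up}}| \geq \sum_i |X_A^{*,i}|$, and combining with the disjoint decomposition $|\dom| = |\dom \cap \bigcup_i V(G^{\mathcal{A}}_{P_i})| + |\dom_{\mathrm{up}}|$ closes out $|\dom| \geq |Z|/2$.

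\textbf{Main obstacle.} The delicate step will be the multi-MSP version of Lemma~\ref{lem:newconnect}: distinct MSPs share encoder vertices at the higher recursion levels, so the per-MSP Menger-type path bundles cannot simply be concatenated into a single vertex-disjoint family. I plan to handle this by an induction on recursion depth that, at every internal Toom-$k$ node, routes the required paths through the $2k-1$ child sub-encoders using Lemma~\ref{lem:conneconder} and appeals to the partial Grigoriev's flow analysis of~\cite{BilardiS19} to bound the total demand at each level; the constraint $|Z| \leq n'/2$ keeps the aggregated demand within the capacity of each encoder's least-significant input slice. This ``composition'' of Lemma~\ref{lem:newconnect} with the partial Grigoriev's flow analysis is exactly the blueprint announced in the sentence immediately preceding the lemma statement.
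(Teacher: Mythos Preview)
Your local step is sound: for each MSP $P_i$, the set $\hat D_i=(\dom\cap V(G^{\mathcal{A}}_{P_i}))\cup(X_A^i\setminus R_i^A)$ does dominate $Z_i$ with respect to $X_A^i$ in $G^{\mathcal{A}}_{P_i}$, and Lemma~\ref{cor:flowcor} with $X'=X_A^i$, $Y'=Z_i$ gives $|\hat D_i|\ge |Z_i|/2$.

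The gap is exactly where you flag it. To finish, you need $|\dom_{\mathrm{up}}|\ge\sum_i|X_A^{*,i}|$, i.e.\ a Menger-type bound for a target set spread across \emph{several} MSPs. Lemma~\ref{lem:newconnect} and Lemma~\ref{lem:conneconder} are both single-subproblem statements: the vertex-disjoint paths they produce for one $P_i$ are routed through the \emph{same} least-significant input slice $I'_A$ of each ancestor encoder, so the path families for distinct MSPs collide there. In particular, at a Toom-$k$ node with $2k-1$ child MSPs, the aggregate demand $\sum_i|X_A^{*,i}|$ can exceed $|I'_A|=\lceil n/k\rceil$, and neither the hypothesis $|Z|\le n'/2$ nor the partial Grigoriev flow bound of~\cite{BilardiS19} gives you the needed capacity: the former controls $|Z|$, not $\sum_i|X_A^{*,i}|$, and the latter concerns output-to-input flow, not disjoint routing through encoders. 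Your inductive sketch does not explain how to merge these colliding families, and I do not see a way to make it work without substantial new combinatorics on the encoder structure.

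The paper avoids this difficulty by \emph{not} summing over MSPs. It argues by contradiction and selects a single extremal MSP $P_{i^*}=\argmin_i |D'_i|/|Z_i|$, where $D'_i=\dom\cap V(G^{\mathcal{A}}_{P_i})$. Inside $P_{i^*}$, Lemma~\ref{cor:flowcor} bounds the set $Y'\subseteq\subpInput_{i^*}$ that $D'_{i^*}$ actually dominates toward $Z_{i^*}$ by $|Y'|\le n_{i^*}|D'_{i^*}|/|Z_{i^*}|$, so the complementary set $Y=\subpInput_{i^*}\setminus Y'$ has $|Y|\ge n_{i^*}(1/2-|D'_{i^*}|/|Z_{i^*}|)$. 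The ratio-of-sums inequality turns the local ratio into the global one: $|D'_{i^*}|/|Z_{i^*}|\le|D'|/|Z|$. Now the \emph{single-MSP} Lemma~\ref{lem:newconnect} applies verbatim to $Y$, giving $|Y|$ vertex-disjoint paths from global inputs to $Y$; at most $|\dom\setminus D'|$ of them are blocked upstream, and since $n_{i^*}\ge n'\ge 2|Z|$ this slack is absorbed, leaving $\pi\ge n_{i^*}(1/2-|\dom|/|Z|)$ unblocked paths from global inputs through $Y$ to $Z_{i^*}$. Under $|\dom|<|Z|/2$ this is positive, contradicting domination. The extremal choice of $i^*$ is precisely what lets the argument stay within one MSP and use Lemma~\ref{lem:newconnect} as stated, rather than the multi-MSP strengthening your plan requires.
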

 Lemma~\ref{lem:domtype1} extends a result originally presented in~\cite{BilardiS19} for uniform, non-stationary  Toom-Cook algorithms, to the more general family of non-uniform, non-stationary hybrid algorithms considered in this work.

\section{\io{} lower bounds}\label{sec:mmlwb}
\begin{theorem} \label{thm:genmatmul}
Let $\mathcal{A}\in \hmm{}$ be an algorithm to multiply two integers   $A,B$ represented as $n$-digit base-$s$ numbers.
If run on a sequential machine equipped with a  cache of size $M$ and such that up to $B\leq M$  memory words stored in consecutive memory locations can be moved from cache to slow memory  and vice versa using a single memory operation, $\mathcal{A}$'s \io{} complexity satisfies:
\begin{equation}\label{eq:hmmmain}
	IO_{\mathcal{A}}\left(n,M,B\right) \geq 
	\max \{2n,|\mathcal{T}|/(4M),\nu(8M) M\}B^{-1},
\end{equation}
 where $4|\mathcal{T}|$ is the total number of internal elementary products computed by the Type 1 $8M$-MSPs  generated  by $\mathcal{A}$ and $\nu(8M)$ denotes the total number of $8M$-MSPs generated by $\mathcal{A}$.

If run on $\nproc{}$ processors each equipped with a local memory of size $M < n$ and where for each \io{} operation it is possible to move up to $\msgsize{}\leq M$ words, $\mathcal{A}$'s \io{} complexity satisfies:
\begin{equation}\label{eq:hmmpar}
	IO_{\mathcal{A}}\left(n,M,\nproc{},\msgsize{}\right) \geq 
	\max \{|\mathcal{T}|/(4M),\nu(8M) M\}\left(\nproc{}\msgsize{}\right)^{-1}.
\end{equation}
\end{theorem}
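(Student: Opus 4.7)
The plan is to apply a Hong-Kung-style phase decomposition of any parsimonious \io{} schedule, combined with the structural properties of maximal sub-problems captured by Lemmas~\ref{lem:distinct}, \ref{lem:domtype3}, and~\ref{lem:domtype1}. For the sequential case, I partition the sequence of \io{} operations into phases of $\lceil M/B \rceil$ consecutive operations each, so that every phase moves at most $M$ memory words. The classical red-blue pebble game analysis then implies that the set $V_t$ of vertices whose value is produced during phase $t$ admits a dominator $D_t$ (with respect to the global input vertices together with any vertex already residing in slow memory at the start of the phase) of size at most $2M$: at most $M$ vertices whose value is in cache at the start of the phase, plus at most $M$ vertices loaded during the phase. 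The three terms in~(\ref{eq:hmmmain}) are then derived separately and the maximum is taken.

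The $2n/B$ term is immediate since each of the $2n$ input digits must be loaded at least once. For the $|\mathcal{T}|/(4MB)$ term, I bound the number of $\mathcal{T}$-vertices produced per phase by decomposing them across the Type~1 $8M$-MSPs, whose sub-CDAGs are vertex-disjoint by Lemma~\ref{lem:distinct}. For each Type~1 MSP $P_i$, setting $T'_i := V_t \cap \mathcal{T}_i$, Lemma~\ref{lem:domtype3} forces a per-MSP sub-dominator of size at least $\max\{|\subpInput{}_{A_i}|,|\subpInput{}_{B_i}|\} \geq \sqrt{|T'_i|}$ (since the $|T'_i|$ elementary products involve at most $|\subpInput{}_{A_i}|\cdot|\subpInput{}_{B_i}|$ distinct input pairs). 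Because these sub-dominators lie in disjoint sub-CDAGs, they sum to at most $|D_t| \leq 2M$, giving $\sum_i \sqrt{|T'_i|} \leq 2M$ and hence $\sum_i |T'_i| \leq \bigl(\max_i \sqrt{|T'_i|}\bigr)\bigl(\sum_i \sqrt{|T'_i|}\bigr) \leq 4M^2$. So at most $4M^2$ vertices of $\mathcal{T}$ are produced per phase, yielding at least $|\mathcal{T}|/(4M^2)$ phases and thus $|\mathcal{T}|/(4MB)$ \io{} operations.

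For the $\nu(8M) M / B$ term, I use Lemma~\ref{lem:domtype1} together with vertex-disjointness: the set of $\subpOutput{}$-vertices produced in a single phase decomposes as $\cup_i Z_{t,i}$ with $Z_{t,i} \subseteq \subpOutput{}_i$, and each $Z_{t,i}$ admits a sub-dominator of size at least $|Z_{t,i}|/2$ (when $|Z_{t,i}|\leq n'/2 = 4M$) lying in the disjoint sub-CDAG of $P_i$; summing across MSPs yields at most $|D_t|\leq 2M$, so the total number of $\subpOutput{}$-vertices produced per phase is at most $4M$. Since $|\subpOutput{}_i| \geq 4M$ for every $8M$-MSP generated by $\mathcal{A}$, we have $|\subpOutput{}| \geq 4M\nu(8M)$, forcing at least $\nu(8M)$ phases and yielding at least $\nu(8M) M/B$ \io{} operations.

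The parallel bound~(\ref{eq:hmmpar}) follows by running the same phase argument locally on each of the $\nproc{}$ processors, with the per-phase I/O capacity now measured in messages of size $\msgsize{}$ rather than blocks of size $B$. Vertex-disjointness of the MSPs' sub-CDAGs (and, within them, of their internal $\mathcal{T}$-vertices) ensures that no produced vertex is double-counted across processors, so the aggregate per-processor work bounds $|\mathcal{T}|$ elementary products and $\nu(8M)$ completed MSPs; dividing by $\nproc{}\msgsize{}$ gives~(\ref{eq:hmmpar}) (with the $2n$ term disappearing since the input may be spread arbitrarily across processors). The main technical obstacle throughout will be carefully navigating the size restriction $|Z| \leq n'/2$ in Lemma~\ref{lem:domtype1} via the vertex-disjoint MSP decomposition, ensuring that the per-MSP sub-dominator contributions to $D_t$ are additive and summable to the global $2M$ bound even when some $Z_{t,i}$ is large.
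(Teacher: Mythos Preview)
Your proposal has a genuine gap in the argument for the $|\mathcal{T}|/(4M)$ term. You claim that for each Type~1 MSP $P_i$, Lemma~\ref{lem:domtype3} yields a ``per-MSP sub-dominator'' of size at least $\sqrt{|T'_i|}$, and that because the MSP sub-CDAGs are vertex-disjoint these sub-dominators are disjoint subsets of $D_t$, giving $\sum_i \sqrt{|T'_i|} \leq |D_t| \leq 2M$. But Lemma~\ref{lem:domtype3} lower-bounds dominators of $T'_i$ \emph{with respect to $\subpInput_i$}, the inputs of the sub-CDAG, whereas the Hong--Kung phase dominator $D_t$ is a dominator with respect to the global inputs. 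Nothing forces $D_t$ to intersect the sub-CDAG of $P_i$ in at least $\sqrt{|T'_i|}$ vertices: the accessed digits $A_i[j],B_i[k]\in\subpInput_i$ may be \emph{recomputed} during the phase from operands in the encoder chain, and those encoder vertices are \emph{shared} across distinct MSPs (indeed, for Toom-$k$ the $2k-1$ sub-problems at one level have total input size roughly $(2k-1)n/k>n$, so there cannot be that many vertex-disjoint paths from the $2n$ global inputs). Thus the portions of $D_t$ that dominate the various $T'_i$ need not be disjoint, and the summation step fails.

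The paper avoids this by a dichotomy that never sums across MSPs. It partitions the schedule into segments in which either $4M^2$ vertices of $\mathcal{T}$ or $4M$ vertices of $\subpOutput$ are computed for the first time, and then shows each segment incurs $\geq M$ \io{} operations. For the $\mathcal{T}$ case: either some \emph{single} MSP has at least $2M$ accessed input digits, in which case Lemma~\ref{lem:newconnect} (which you do not invoke) gives $|D_j|\geq 2M$ directly; or every MSP accesses fewer than $2M$ digits, in which case no active output digit $C_i[k]$ can be completed within the segment, and a counting argument on the summation-tree accumulators (using parsimony) shows more than $2M$ partial accumulators must either enter or leave the cache, again forcing $g_j\geq M$. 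This accumulator argument is the missing ingredient in your plan and is where the specific structure of the standard-class algorithms (independent summation trees for each output digit) is actually used.

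For the $\nu(8M)M$ term your decomposition across MSPs is both unnecessary and carries the same additivity gap: Lemma~\ref{lem:domtype1} is already stated for dominators with respect to the \emph{global} inputs and applies directly to the full set $\subpOutput_j$ of size $4M\leq n'/2$, yielding $|D_j|\geq 2M$ in one step. The size restriction you flag as the ``main technical obstacle'' is handled simply by defining segments so that exactly $4M$ $\subpOutput$-vertices are produced.
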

\begin{proof} 
We prove the result in~\eqref{eq:hmmmain} (resp.,~\eqref{eq:hmmpar}) for the case $B=1$ (resp., $\msgsize{}=1$). The result then trivially generalizes for a generic $B$ (resp., $\msgsize{}$). We first prove the result for the sequential case in~\eqref{eq:hmmmain}. The bound for the parallel case in~\eqref{eq:hmmpar} will be obtained as a simple generalization. 

The fact that $IO_{\alg}(n,M,1)\geq 2n$ follows from the fact that as in our model the input factor integers $A$ and $B$ are initially stored in slow memory, it will be necessary to move the entire input to the cache at least once using at least $2n$ \io{} operations.  If $\alg{}$ does not generate any $8M$-MSPs the statement in~\eqref{eq:hmmmain} is trivially verified. In the following, we assume $\nu(8M)\geq 1$. 

Let $G^{\mathcal{A}}$ denote the CDAG associated with algorithm $\mathcal{A}$ according to the construction in Section~\ref{sec:CDAG}. 
By Definition~\ref{def:maxsubp}, the  MSPs generated by $\alg{}$ have input (resp., output) integer factors whose base-$s$ expansion has at least $\subpsize{}$ digits. Recall that we denote as $\subpOutput{}$ the set of vertices which correspond to the digits of outputs of the $\nu(8M)$  $8M$-MSPs, starting from the $\left(\halfsubpsize{}+1\right)$-least significant up to the $\subpsize$-th least significant one.  We have $|\subpOutput{}|\geq \halfsubpsize{}\nu(8M)$. 
For each Type 1 $8M$-MSP $P_i$ generated by $\mathcal{A}$, with input integers of  size $n_i\geq 8M$, we denote as $T_i$ the set of variables whose value correspond to the at least $\lceil n_i^2/4\rceil$ elementary products  $A_i[j]B_i[k]$ for $j,k = 0,1,\ldots,\left(n_i-1\right)/2$ and $n_i/4\leq j+k\leq 3n_i/4$. By definition, these elementary products are computed by any integer multiplication algorithm from the standard class towards computing the output digits from the $\left(\halfsubpsize{}+1\right)$-least significant up to the $\subpsize$-th least significant one (i.e., the values corresponding to vertices in $\subpOutput{}$). Further, we denote as $\mathcal{T}_i$ the set of vertices corresponding to the variables in $T_i$, and we define $\mathcal{T}= \cup_{i=1}^{\nu_1(n')} \mathcal{T}_i$.  Clearly, $|\mathcal{T}_i|\geq n_i^2/4$, and $4|\mathcal{T}|= \sum_{i=1}^{\nu_1(8M)} n_i^2$.


Let $\mathcal{C}$ be any computation schedule for the sequential execution of $\mathcal{A}$ using a cache of size $M$. We partition $\mathcal{C}$ into non-overlapping segments  $\mathcal{C}_1,\mathcal{C}_2,\ldots$ such that during each $\mathcal{C}_j$ either \textbf{(a)} exactly $4M^2$ distinct values corresponding to vertices in $\mathcal{T}$, denoted as $\mathcal{T}^{(j)}$, are computed for the \emph{first time},
or \textbf{(b)} $4M$ distinct values corresponding to vertices in $\subpOutput{}$ (denoted as $\subpOutput{}_j$) are evaluated for the \emph{first time}. Clearly there are at least $\max \{|\mathcal{T}|/{4M^2},\nu(8M) \}$ such segments. 

Consider the set $\dom{}_j$ of vertices of $G_{\mathcal{A}}$ corresponding to the at most $M$ values stored in the cache at the beginning of $\mathcal{C}_j$ and to the at most $g_j$ values loaded into the cache form the slow memory (resp., written into the slow memory from the cache) during  $\mathcal{C}_j$ by means of a \texttt{read} (resp., \texttt{write}) \io{} operation. Clearly, $|\dom_j|\leq M +g_j$.
Below we show that the number $g_j$ of  \io{} operations  executed during each segment $\mathcal{C}_j$ satisfies $g_j\geq  M$, from which, as the segments are defined as non-overlapping, the theorem follows. \\

\noindent\textbf{Case (a):} For each Type 1 $8M$-MSP $P_i$  let $\mathcal{T}^{(j)}_i = \mathcal{T}^{(j)}\cap \mathcal{T}_i$. As the $\nu_1$ sub-CDAGs corresponding each to one of the Type 1 $8M$-MSPs are vertex-disjoint, so are the sets $\mathcal{T}_i$. Hence, the $\mathcal{T}^{(j)}_i$'s constitute a partition of $\mathcal{T}^{(j)}$. 
    Let $A_i$ and $B_i$ (resp., $C_i$) denote the input integers (resp., output product) of $P_i$, where $A_i$ and $B_i$ can be expressed in base-$s$ using up to $n_i$ digits.
    
    For $k = n_i/4,n_i/4+1,\ldots,3n_i/4-1$, we say that $C_i[k]$ is ``\emph{active during} $\mathcal{C}_j$'' if \emph{any} of the at least $n_i/4$ elementary products $A_i[r]B_i[s]$, for $0\leq r,s< n_i/2$ and $r+s=k$, correspond to any of the vertices in $\mathcal{T}^{(j)}_i$. Further we say that a $A_i[r]$ (resp., $B_i[r]$) is ``\emph{accessed during} $\mathcal{C}_j$'' if \emph{any} of the elementary multiplications $A_i[r]B_i[l]$ (resp., $A_i[l]B_i[r]$), for $l= 0,1,\ldots,n_i-1$, correspond to any of the vertices in $\mathcal{T}^{(j)}_i$. In the following we denote as $\alpha_i$ (resp., $\beta_i$) as the number of digits of $A_i$ (resp., $B_i$) which are accessed during $\mathcal{C}_j$, and let $\gamma_i$ denote the number of digits of $C_i$ which are active during $\mathcal{C}_j$. 
    By definition, $\max\{\alpha_i,\beta_i\}\geq \lceil\sqrt{|\mathcal{T}^{(j)}_i|}\rceil$ and $\gamma_i\geq \lceil |\mathcal{T}^{(j)}_i|/\max\{\alpha_i,\beta_i\}\rceil$.
    

Assume that for at least one of the $\nu_1(8M)$ Type 1 $8M$-MSPs, henceforth denoted as $P_i$, at least $2M$ among the $n_i/2$ least significant digits of $A_i$ or $B_i$ are accessed during $\mathcal{C}_j$. We denote as $Y$ the set of vertices of $G^{\mathcal{A}}$ corresponding to such values. Clearly $Y\subseteq\mathcal{Y}_i\subseteq \mathcal{Y}$. In order for the at least $2M$ values corresponding to vertices in $Y$ to be either stored in memory at the beginning of $\mathcal{C}_j$, read form slow memory to the cache during $\mathcal{C}_j$, or computed from other values during $\mathcal{C}_j$ there must be no path connecting any input vertex of $G^{\mathcal{A}}$ to any vertex in $Y$ which does not have at least one vertex in $\dom{}_j$. That is,  $\dom{}_j$ must be a \emph{dominator set} of $Y$, and,  by Lemma~\ref{lem:domtype3},
  $|D_j|\geq |Y|\geq 2M$.  
  Hence, we can conclude that if any of the active values $C[k]$ is computed entirely during $\mathcal{C}_j$ we have $g_j\geq M$.

Assume instead that for all Type 1 $8M$-MSPs $P_i$ generated by $\mathcal{A}$ strictly less than  $2M$ among the $n_i/2$ least significant digits of $A_i$ or $B_i$ are accessed during $\mathcal{C}_j$. That is $\max_{i\in\{1,2,\ldots,\nu_1(8M)\}}\{\alpha_i,\beta_i\}<2M$. Let $C_i[k]$ be active during $\mathcal{C}_j$. In order to compute $C[k]$ \emph{entirely} during $\mathcal{C}_j$ (i.e., without using partial accumulation of the summation $\sum_{r,s\geq 0|r+s=k} A_i[r]B[s]$), it will be necessary to evaluate all the at least $n_i/4$ elementary products which are added in the summation during $\mathcal{C}_j$ itself. Thus, at least $n_i/4\geq 2M$ values corresponding to digits from $A_i$ and $B_i$ must be accessed during $\mathcal{C}_j$. As, by assumption, for all $i\in\{1,2,\ldots,\nu_1(8M)\}$ we have $\alpha_i,\beta_i<2M$,  none of the values $C_i[k]$ which are active during $\mathcal{C}_j$ are computed entirely during $\mathcal{C}_j$ itself. As $\mathcal{C}$ is a parsimonious computation, for any value $C_i[k]$ which is active during $\mathcal{C}_j$ the values of the elementary products which are computed as part of the computation of $C_i[k]$ must be added to a partial accumulator corresponding to the summation tree of $C_i[k]$. Such an accumulator may have either already been previously initiated and be maintained in the memory (either cache or slow), or it may necessary to maintain it in memory (either cache or slow) at the end of $\mathcal{C}_j$. Thus, all the $\sum_{i=1}^{\nu_1}\gamma_i$ partial accumulators corresponding to active values $C_i[k]$ must be  included in the cache at the beginning of  $\mathcal{C}_j$, read form slow memory to the cache by means of a \texttt{read} \io{} operation, maintained in the cache at the end of $\mathcal{C}_j$, or written from the cache to the slow memory by means of a \texttt{write} \io{} operation. By the previous considerations we have:
  \begin{equation*}
      \sum_{i=1}^{\nu_1(M)}\gamma_i \geq \sum_{i=1}^{\nu_1(M)}\bigl\lceil\frac{|\mathcal{T}^{(j)}_i|}{\max\{\alpha_i,\beta_i\}}\bigr\rceil
      > \sum_{i=1}^{\nu_1(M)}\frac{|\mathcal{T}^{(j)}_i|}{2M}
      =\frac{|\mathcal{T}^{(j)}|}{2M}
      =2M.
  \end{equation*}
  Since at most $M$ of such accumulators may be in the cache at the beginning of (resp., remain in the cache at the of) $\mathcal{C}_j$, we can conclude that at least $M$ \io{} operations are executed during $\mathcal{C}_j$ (i.e., $g_j\geq M$).\\
   
\noindent\textbf{Case (b):} In order for the $4M$ values from $\subpOutput{}_j$ to be computed during $\mathcal{C}_j$ there must be no path connecting any vertex in $\subpOutput{}_j$ to any input vertex of $G_{\mathcal{A}}$ which does not have at least one vertex in $\dom_j$, that is  $\dom_j$ has to be a \emph{dominator set} of $\subpOutput{}_j$.
	 From Lemma~\ref{lem:domtype1}, any dominator set $D$ of any subset $Z\subseteq \subpOutput{}$ with $|Z|\leq 4M$ satisfies $|D|\geq |Z|/2$, whence $|D_j|=M+g_i\geq |\dom_i|\geq |\subpOutput{}_j|/2 =2M$, which implies $g_j\geq M$ as stated above.\\

	 This concludes the proof for the sequential case in~\eqref{eq:hmmmain}.

	The proof for the memory-dependent bound for the parallel model in ~\eqref{eq:hmmpar}, follows from the observation that at least one of the $\nproc{}$ processors, denoted as $\nproc{}^*$, must compute at least $|\mathcal{T}|/\nproc{}$ values corresponding to vertices in $\mathcal{T}$ or $|\subpOutput{}|/\nproc{}$ values corresponding to vertices in $\subpOutput{}$ (or both). The bound follows by applying the same argument discussed for the sequential case to the computation executed by $\nproc^*$. 
\end{proof}
 Note that if $\mathcal{A}$ is an uniform, non-stationary algorithm such that the product is entirely computed using algorithms from the Toom-Cook class (i.e.,  at each level of recursion the same version of Toom-$k$ -with the same value $k$- is used for all the sub-problems generated in all recursion branches, and at different levels of recursion different versions of the Toom-$k$ may be used), then $\mathcal{A}$ only generates Type 2 MSPs, and the bounds of Theorem~\ref{thm:genmatmul} correspond asymptotically to the results in~\cite{BilardiS19}.
 
 Instead, if $\mathcal{A}$ is such that the integer product is entirely computed using algorithms form the standard class, Theorem~\ref{thm:genmatmul} yields the following bound.
 
 \begin{corollary}[\io{} complexity of standard integer multiplication algorithms]\label{cor:standardio}
Let $\mathcal{A}$ be any standard integer multiplication algorithm which computes $\BOme{n^2}$ digit operations to multiply two integers $A,B$ represented as $n$-digit base-$s$ numbers on a sequential machine with  cache of size $M$ and such that up to $B\leq M$  memory words stored in consecutive memory locations can be moved from cache to slow memory  and vice versa using a single memory operation. The \io{} complexity of $\mathcal{A}$ satisfies:
\begin{equation}\label{eq:corstanda}
	IO_{\mathcal{A}}\left(n,M,B\right) \geq 
	\BOme{n^2/(BM)}.
\end{equation}
If run on $\nproc{}$ processors each equipped with a local memory of size $M < n$ and where for each \io{} operation it is possible to move up to $\msgsize{}\leq M$ words, $\mathcal{A}$'s \io{} complexity satisfies:
\begin{equation}\label{eq:corpar}
	IO_{\mathcal{A}}\left(n,M,\msgsize{},\nproc{}\right) \geq \BOme{n^2/(M\nproc{}\msgsize{})}.
\end{equation} 
 \end{corollary}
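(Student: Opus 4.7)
The plan is to obtain Corollary~\ref{cor:standardio} as a direct specialization of Theorem~\ref{thm:genmatmul}, simply by identifying the parameters $\nu(8M)$ and $|\mathcal{T}|$ in the setting where $\mathcal{A}$ is entirely a standard algorithm and then plugging them into the general bounds.

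First I would handle the easy regime $n<8M$ separately: in this case $\mathcal{A}$ generates no $8M$-MSPs (by Definition~\ref{def:maxsubp}), so the sequential bound in Theorem~\ref{thm:genmatmul} reduces to the trivial input-reading lower bound $IO_{\mathcal{A}}(n,M,B)\geq 2n/B$. Since $n<8M$ implies $n^2/(MB)\leq 8n/B$, we have $2n/B=\Omega(n^2/(MB))$, which already gives~\eqref{eq:corstanda}. In the parallel case $n<8M$ yields $n^2/(MPB_m)=O(1)$, so~\eqref{eq:corpar} is trivially true.

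For the main case $n\geq 8M$, I would observe that since the entire multiplication is computed using an algorithm from the standard class, the entire problem $P$ itself satisfies the requirements of Definition~\ref{def:maxsubp}: it has no ancestor sub-problems (vacuously all of them are Toom-Cook), has input size $n\geq 8M$, and is computed by a standard-class algorithm. Hence $P$ is the unique (improper) Type~1 $8M$-MSP generated by $\mathcal{A}$, so $\nu_1(8M)=1$, $\nu_2(8M)=0$, and therefore $\nu(8M)=1$. With $n_1=n$, the set $\mathcal{T}=\mathcal{T}_1$ contains the vertices corresponding to the elementary products $A[j]B[k]$ with $0\leq j,k\leq (n-1)/2$ and $n/4\leq j+k\leq 3n/4$, whence $|\mathcal{T}|\geq n^2/4$.

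Plugging these into~\eqref{eq:hmmmain} yields
\begin{equation*}
IO_{\mathcal{A}}(n,M,B)\geq \max\bigl\{2n,\ n^2/(16M),\ M\bigr\}B^{-1}=\Omega\bigl(n^2/(MB)\bigr),
\end{equation*}
which proves~\eqref{eq:corstanda}. Similarly, plugging into~\eqref{eq:hmmpar} gives
\begin{equation*}
IO_{\mathcal{A}}(n,M,P,\msgsize{})\geq \max\bigl\{n^2/(16M),\ M\bigr\}(P\msgsize{})^{-1}=\Omega\bigl(n^2/(MP\msgsize{})\bigr),
\end{equation*}
establishing~\eqref{eq:corpar}. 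There is essentially no technical obstacle beyond verifying the specialization: the only mildly delicate point is the bookkeeping on $n<8M$ versus $n\geq 8M$ to confirm that the bound holds across the whole parameter range, and this is handled cleanly by leveraging the $2n/B$ input-reading term in the sequential case and triviality in the parallel case.
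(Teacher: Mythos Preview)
Your proposal is correct and follows exactly the paper's approach: the paper presents the corollary immediately after Theorem~\ref{thm:genmatmul} with the remark that when $\mathcal{A}$ is entirely a standard algorithm the entire problem is the unique (improper) Type~1 $8M$-MSP, so the bound is obtained by specializing~\eqref{eq:hmmmain} and~\eqref{eq:hmmpar}. Your write-up is in fact more explicit than the paper's, which gives no separate proof; the only slightly loose step is the claim that $n^2/(MPB_m)=O(1)$ in the parallel regime $M<n<8M$, but this boundary case is equally unaddressed by the paper and does not affect the main argument.
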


    
For the sub-class of uniform, non stationary algorithms $\uhmm{}$, given the values of $n$, $M$, $\nz{}$ and $k$, it is possible to compute a closed form expression for the values of $\nu_1,\nu_2$ and $|\mathcal{T}|$.
By applying  Theorem~\ref{thm:genmatmul} we have:
\begin{theorem}\label{thm:corgenmatmul}
Let $\mathcal{A}\in \uhmm{}$ be an algorithm to multiply two integers $A,B$ represented as $n$-digit base-$s$ numbers. If run on a sequential machine with  cache of size $M$ and such that up to $B\leq M$  memory words stored in consecutive memory locations can be moved from cache to slow memory  and vice versa using a single memory operation, $\mathcal{A}$'s \io{} complexity satisfies:
\begin{equation*}
	IO_{\mathcal{A}}\left(n,M,B\right) \geq \BOme{\left(\frac{n}{\max\{\nz{},M\}}\right)^{\log_k \left(2k-1\right)} \left(\max\bigl\{1, \frac{\nz{}}{M}\bigr\}\right)^2 \frac{M}{B}}
\end{equation*}
If run on $\nproc{}$ processors each equipped with a local memory of size $M < n^2$ and where for each \io{} operation it is possible to move up to $\msgsize{}\leq M$ memory words, $\mathcal{A}$'s \io{} complexity satisfies:
\begin{equation*}
	IO_{\mathcal{A}}\left(n,M,\nproc{},\msgsize{}\right) \geq \BOme{\left(\frac{n}{\max\{\nz{},M\}}\right)^{\log_k \left(2k-1\right)} \left(\max\bigl\{1, \frac{\nz{}}{M}\bigr\}\right)^2 \frac{M}{\nproc{}\msgsize{}}}
\end{equation*}
\end{theorem}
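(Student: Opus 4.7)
The plan is to derive Theorem~\ref{thm:corgenmatmul} as a direct specialization of Theorem~\ref{thm:genmatmul}, since membership in $\uhmm{}$ forces a regular recursion tree for which $\nu_1(8M)$, $\nu_2(8M)$ and $|\mathcal{T}|$ admit simple closed-form estimates. I first describe the recursion tree of $\mathcal{A}\in\uhmm{}$: by definition the same Toom-$k$ is used at every recursion level until the input size drops to at most $\nz{}$. Each recursive step transforms a (sub)problem of size $n'$ into $2k-1$ subproblems whose sizes are $\Theta(n'/k)$, so after $\ell$ levels there are $(2k-1)^\ell$ subproblems each of size $\Theta(n/k^\ell)$. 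The recursion terminates at the first level $\ell^*$ at which the subproblem size falls at or below $\nz{}$, giving $\ell^* = \Theta(\log_k(n/\nz{}))$ and $(2k-1)^{\ell^*} = \Theta\!\left((n/\nz{})^{\log_k(2k-1)}\right)$ leaves, each computed by a standard algorithm on inputs of size $\Theta(\nz{})$.

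Next I split into two cases according to the relation between $\nz{}$ and $M$. In the case $\nz{}\ge M$, every leaf of the recursion has input size $\Theta(\nz{})\ge 8M$ (absorbing the constant $8$ into the $\Omega$), so all the leaves are Type~1 $8M$-MSPs and there are no Type~2 $8M$-MSPs. Therefore
\begin{equation*}
  \nu(8M) \;=\; \nu_1(8M) \;=\; \Theta\!\left(\bigl(n/\nz{}\bigr)^{\log_k(2k-1)}\right),
  \qquad
  |\mathcal{T}| \;=\; \Theta\!\left(\nu_1(8M)\,\nz{}^2\right).
\end{equation*}
Plugging into bound~\eqref{eq:hmmmain} the term $|\mathcal{T}|/(4M)$ dominates $\nu(8M)\,M$ (since $\nz{}\ge M$), and a routine algebraic rewrite yields $\Omega\!\left((n/\nz{})^{\log_k(2k-1)}(\nz{}/M)^2 M\right)$, which is precisely the claimed expression when $\max\{\nz{},M\}=\nz{}$ and $\max\{1,\nz{}/M\}=\nz{}/M$.

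In the complementary case $\nz{}<M$, the recursion still uses Toom-$k$ on subproblems of size $\Theta(M)>\nz{}$, so the $8M$-MSPs occur inside the Toom-Cook portion of the computation and are therefore Type~2, with sizes $\Theta(M)$ and count $\nu_2(8M)=\Theta\!\left((n/M)^{\log_k(2k-1)}\right)$. No Type~1 $8M$-MSP exists (standard algorithms are only invoked below the threshold $\nz{}<M$), so $|\mathcal{T}|=0$. The surviving term in~\eqref{eq:hmmmain} is $\nu(8M)\,M=\Omega\!\left((n/M)^{\log_k(2k-1)}M\right)$, which matches the claimed formula when $\max\{\nz{},M\}=M$ and $\max\{1,\nz{}/M\}=1$. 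Combining both cases yields the sequential bound after dividing by $B$; the parallel bound follows by the same case analysis applied to the $\nproc{}\msgsize{}$-scaled bound~\eqref{eq:hmmpar} from Theorem~\ref{thm:genmatmul}.

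The only delicate point is the handling of the regime where $\nz{}$ is within a constant factor of $M$ (or where the leaf size $\Theta(\nz{})$ is within a constant factor of $8M$): in that transition the same subproblem could be classified either way, but in both classifications the constants differ by a factor absorbed into the $\Omega$-notation, so the two cases paste together seamlessly. This, together with carefully tracking the $\lceil\cdot\rceil$ rounding in the recursion (which only affects lower-order multiplicative constants), is the main bookkeeping obstacle; there is no new combinatorial ingredient beyond Theorem~\ref{thm:genmatmul} itself.
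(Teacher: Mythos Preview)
Your proposal is correct and follows essentially the same approach as the paper: determine the number and type of $8M$-MSPs from the uniform Toom-$k$ recursion tree, then plug the resulting estimates for $\nu_1(8M)$, $\nu_2(8M)$, and $|\mathcal{T}|$ into Theorem~\ref{thm:genmatmul}. The paper's own proof is terser---it phrases the case split via the single threshold $\max\{\nz{},8M\}$ rather than your explicit two cases---but the content is identical, including the observation that boundary regimes (where $\nz{}$ and $M$ are within constant factors) are absorbed by the asymptotic notation.
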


\begin{proof}
As until the size of the sub-problems is higher than $\max\{n_0,8M\}$ $\mathcal{A}$  behaves as an uniform Toom-$k$ algorithm, we have that the inputs of the $(2k-1)^i$ sub-problems generated in the $i$-th level of recursion have size at least $n/k^i$.

Let $\ell$ be the  minimum level of recursion levels $\ell$, \emph{necessary} for the size of the input of the sub-problems to fall below $\max\{n_0,8M\}$. We have $\ell= \lceil \log_k \frac{n}{\max\{n_0,8M\}}\rceil$.
Then, by the definition of MSP, either $\mathcal{A}$ generates 
$\nu_2\geq \BOme{\left(n/M\right)^{\log_k (2k-1)}}$
Type 2 MSPs, or $\mathcal{A}$ generates
$\nu_1\geq \BOme{\left(n/\nz{}\right)^{\log_k (2k-1)}}$
vertex disjoint Type 1 MSP, each with input size \emph{at least} $\nz$. Thus 
    $|\mathcal{T}|\geq \BOme{\left(n/\nz{}\right)^{\log_k (2k-1)}\nz^2}$.
\end{proof} 
The \io{} complexity lower bounds for the parallel setting presented in~\eqref{eq:hmmpar}, and its special cases  in Corollary~\ref{cor:standardio} and Theorem~\ref{thm:corgenmatmul}, are memory-dependent as we assume that each processor is equipped with a cache which can hold up to $M$ memory words. For $\msgsize{}=1$ these results provide  lower bounds to the \emph{bandwidth cost} (i.e., number of memory words exchanged by at least one processor) of the respective class of algorithms.  

\subsection{Memory-independent \io{} lower bounds in the parallel model\\ assuming balanced input distribution}\label{sec:mind}
We now present memory-independent \io{} lower bounds for parallel algorithms in $\hmm{}$. For these results, we do not assume any constraint on the size of the local memories of the $P$ used processors (which can be assumed to be unbounded). Rather, we assume a \emph{balanced distribution} of the input integers among the $P$ processors at the beginning of the computation. We do not, however, assume any further balance for the overall computational load of the algorithm, nor for the distribution of the output at the end of the computation. 

\begin{theorem}\label{thm:memindbound}
Let $\mathcal{A}\in \hmm{}$ be an algorithm to multiply $A,B$ represented as $n$-digit base-$s$ numbers using $\nproc{}$ processors each equipped unbounded local memory. Assume further that at the beginning of $\mathcal{A}$ no processor has more than $\alpha n/\nproc{}$, where $\alpha \in [1,P]$,  digits of each of the input integers stored in its local memory, and that for each \io{} operation it is possible to move up to $\msgsize{}$ memory words. $\mathcal{A}$'s \io{} complexity satisfies:

\begin{equation*}
	IO_{\mathcal{A}}\left(n,\nproc{},\msgsize{}\right) \geq \BOme{\frac{\max_{n'\in \{\alpha n/P,\ldots,n\}}n'\min\{\frac{1}{4}, \sqrt{\frac{\nu_i(n')}{8P}}+\frac{\nu(n')}{2\nproc{}}\}-\frac{\alpha n}{P}}{\msgsize{}}}
\end{equation*}
\end{theorem}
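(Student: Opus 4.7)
The plan is to adapt the dominator-based argument of Theorem~\ref{thm:genmatmul} to the memory-independent parallel setting by identifying a single ``busy'' processor whose share of the computation forces it to receive a large number of input digits via messages. Unlike the memory-dependent proof, there is no cache-size parameter with which to partition the computation into segments; instead, one directly compares what the processor initially holds against the dominator size demanded by the portion of the CDAG the processor evaluates.

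First I would fix an arbitrary $n' \in \{\alpha n/\nproc{},\ldots,n\}$ and consider the $\nu(n')$ MSPs generated by $\mathcal{A}$, together with the associated sets $\subpOutput{}$ and $\mathcal{T}$ introduced in Section~\ref{sec:msp}. By Lemma~\ref{lem:distinct} the sub-CDAGs of distinct MSPs are vertex-disjoint, so every vertex in $\subpOutput{} \cup \mathcal{T}$ is evaluated by exactly one processor. Averaging over the $\nproc{}$ processors, there exists a processor $p^{*}$ whose computation contains a subset $Z^{*} \subseteq \subpOutput{}$ of size at least $|\subpOutput{}|/\nproc{} \geq \nu(n')n'/(2\nproc{})$ and/or a subset $T^{*} \subseteq \mathcal{T}$ of size at least $|\mathcal{T}|/\nproc{} \geq \nu_{1}(n')(n')^2/(4\nproc{})$. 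The two additive contributions inside the $\min$ of the stated bound will correspond, respectively, to outputs in $\subpOutput{}$ and to elementary products in $\mathcal{T}$, while the cap at $1/4$ is dictated by the range of applicability of Lemma~\ref{lem:domtype1}.

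Next I would argue that, in order for $p^{*}$ to evaluate every vertex in $V^{*} = Z^{*} \cup T^{*}$, the union $I^{*}$ of the input digits initially stored on $p^{*}$ with all values delivered to $p^{*}$ via incoming messages must contain a dominator of $V^{*}$ with respect to the input vertex set of $G^{\mathcal{A}}$. To lower-bound the size of any such dominator I would invoke Lemma~\ref{lem:domtype1} on $Z^{*}$ (truncating $Z^{*}$ to a subset of size $\min\{|Z^{*}|,\,n'/2\}$ so as to respect the lemma's hypothesis, which is exactly where the $n'/4$ cap arises) and Lemma~\ref{lem:domtype3} on $T^{*}$ one MSP at a time. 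The per-MSP bounds for $T^{*}$ aggregate into a global contribution of order $n'\sqrt{\nu_{1}(n')/\nproc{}}$ via the Loomis--Whitney-type observation that $k_i$ elementary products inside a single Type~1 MSP touch at least $2\sqrt{k_i}$ distinct digits of $A_i$ or $B_i$: minimising $\sum_i \sqrt{k_i}$ subject to $\sum_i k_i \geq \nu_{1}(n')(n')^2/(4\nproc{})$ and $k_i \leq (n')^2/4$ is achieved by concentrating the work into as few MSPs as possible and yields the stated $\Omega(n'\sqrt{\nu_{1}(n')/\nproc{}})$ term in both the ``few MSPs'' and the ``many MSPs'' regimes.

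Finally, the balanced-distribution hypothesis guarantees that $p^{*}$ stores at most $\alpha n/\nproc{}$ digits of each input integer at the start of the computation, so at most $\BO{\alpha n/\nproc{}}$ of the vertices of any dominator of $V^{*}$ can be present in $p^{*}$'s local memory initially; the remainder must be received via incoming \io{} operations carrying at most $\msgsize{}$ memory words apiece. Combining this count with the dominator lower bound and maximising over $n' \in \{\alpha n/\nproc{},\ldots,n\}$ gives the theorem. I expect the most delicate step to be the aggregation of per-MSP dominator bounds into the global $\sqrt{\nu_{1}(n')/\nproc{}}$ factor, together with the verification that the $\min\{1/4,\,\cdot\}$ correctly captures the transition between the regime in which $p^{*}$ handles only a bounded number of MSPs in full (so that Lemma~\ref{lem:domtype1} applies with no truncation) and the regime in which $p^{*}$ handles more than $\nproc{}^{-1}$ times $\nu(n')$ worth of outputs (in which case the lemma must be applied to a size-$n'/2$ truncation of $Z^{*}$); the various constants arising in this analysis will be absorbed by the $\Omega$ of the statement.
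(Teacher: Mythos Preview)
Your overall architecture---select a busy processor $p^{*}$, lower-bound the size of any dominator of the values it computes, and subtract the at most $\alpha n/\nproc{}$ inputs it holds initially---matches the paper, and your treatment of the $\subpOutput{}$ contribution via Lemma~\ref{lem:domtype1} with truncation to size $n'/2$ is exactly what the paper does in its Case~(b).

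The gap is in your handling of the $\mathcal{T}$ contribution. You propose to apply Lemma~\ref{lem:domtype3} ``one MSP at a time'' and then \emph{sum} the resulting per-MSP bounds to obtain a global dominator lower bound of order $\sum_i \sqrt{k_i}$. This aggregation is not justified by the lemmas available. Lemma~\ref{lem:domtype3} bounds dominators with respect to the sub-CDAG inputs $\subpInput{}_i$, and Lemma~\ref{lem:newconnect} lifts this to a bound with respect to the \emph{global} inputs only for a \emph{single} $Y\subseteq\subpInput{}_i$. For subsets of several distinct $\subpInput{}_i$'s simultaneously, the vertex-disjoint paths produced by Lemma~\ref{lem:newconnect} pass through shared encoder sub-CDAGs and need not be mutually disjoint; indeed, $\sum_i\sqrt{k_i}$ can exceed the number $2n$ of global input vertices, so a dominator of that size cannot be forced. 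Consequently, when the adversary spreads $p^{*}$'s elementary products thinly across many Type~1 MSPs, no single-MSP dominator argument yields the required $n'\sqrt{\nu_1(n')/\nproc{}}$ term, and the sum of per-MSP bounds is not a valid lower bound on the global dominator.

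The paper closes exactly this gap with a second, non-dominator mechanism. Its Case~(a) splits into two sub-cases: either some single MSP has at least $\min\{n'/4,\sqrt{|\mathcal{T}|/(2\nproc{})}\}$ accessed input digits, in which case Lemma~\ref{lem:newconnect} applied to that one MSP already gives the bound; or every MSP has fewer accessed digits, in which case no output digit $C_i[k]$ that $p^{*}$ ``contributes to'' can be computed in full by $p^{*}$, so for each of the $\sum_i\gamma_i$ such digits a partial accumulator must be sent or received. The inequality $\gamma_i\geq k_i/\max\{\alpha_i,\beta_i\}$ then gives $\sum_i\gamma_i\geq\sqrt{(n')^2\nu_1(n')/(2\nproc{})}$ directly as a count of communicated words, with no appeal to global dominators. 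Your proposal is missing this accumulator-counting argument, and without it the spread-out sub-case does not go through.
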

 The main idea for the proof of Theorem~\ref{thm:memindbound} draws inspiration from the reasoning used in Theorem~\ref{thm:genmatmul}: given any $\mathcal{A}\in \hmm$, for $n'\in \{1,2,\ldots,n\}$ at least one of the $P$ processors, denoted as $P^*$, must compute either at least $(n')^2\nu_1(n')/(4P)$ of the elementary products of the $\nu_1(n')$ Type 1 $n'$-MSPs generated by $\mathcal{A}$, or at least $n'\nu (n')/P$ output values of the $\nu(n')$ $n'$-MSPs generated by $\mathcal{A}$. As discussed in the proof of Theorem~\ref{thm:genmatmul} any dominator set $D$ for the vertices corresponding to these values must be such that $D\geq n'\sqrt{\nu_1(n')}/(2\sqrt{P})$(resp., $|D|\geq n'\nu (n')/P$), and hence there are at least $D$ vertex-disjoint paths connecting the vertices corresponding to the values computed by $P^*$ to input vertices of $G^{\mathcal{A}}$. At most  $\alpha n/\nproc{}$ of these paths may be dominated by vertices corresponding to the at most $\alpha n/\nproc{}$ input values initially stored in $P^*$ or which may be computed by $P^*$ without communicating with other processors. For each remaining path it will then necessary to receive at least a value from one of the other processors. The detailed proof of Theorem~\ref{thm:memindbound} is presented in  Appendix~\ref{app:proofmind}.

For the sub-class of uniform, non stationary algorithms $\uhmm{}$, given the values of $n$, $M$, $\nz{}$ and $k$, it is possible to compute a closed form expression for the values of $\nu_1(n')$ and $\nu_2(n')$ for $n'= n/P^{1/\log_k (2k-1)}$ and for $n'=\nz{}$. In particular, if $n_0< n/P^{1/\log_k (2k-1)}$ we have $\nu(n/P^{1/\log_k (2k-1)}) = \Theta\left(P\right)$. Otherwise, we have $\nu_1(\nz{})= \Theta\left(\left( n/\nz{} \right)^{\frac{\log_k (2k-1)}{2}}\right)$.
Then, by applying  Theorem~\ref{thm:memindbound} we have:

\begin{theorem}\label{thm:specmemindbound}
Let $\mathcal{A}\in \uhmm{}$ be an algorithm to multiply $A,B$ represented as $n$-digit base-$s$ numbers using $\nproc{}$ processors each equipped unbounded local memory. Assume further that at the beginning of $\mathcal{A}$ no processor has more than $\alpha n/\nproc{}$ (where $\alpha$ is a constant with respect to $n/\nproc{}$) digits of each of the input integers stored in its local memory, and that for each \io{} operation it is possible to move up to $\msgsize{}$ memory words. $\mathcal{A}$'s \io{} complexity satisfies:
\begin{equation*}
	IO_{\mathcal{A}}\left(n,\nproc{},\msgsize{}\right) \geq \begin{cases} \BOme{\frac{n}{\msgsize{}P^{1/\log_k (2k-1)}}}, &if\ \frac{n}{P^{1/\log_k (2k-1)}}\geq n_0\\
	\BOme{\left(\frac{n}{n_0}\right)^{\frac{\log_k (2k-1)}{2}} \frac{n_0}{\msgsize{}\sqrt{\nproc{}}}}, &if\ n_0> \frac{n}{P^{1/\log_k (2k-1)}}.
	\end{cases} 
\end{equation*}
\end{theorem}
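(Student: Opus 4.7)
The plan is to apply Theorem~\ref{thm:memindbound} with a carefully chosen threshold $n'$, exploiting the regular recursion structure of algorithms in $\uhmm{}$ to derive closed-form expressions for $\nu_1(n')$ and $\nu(n')$. For any $\mathcal{A}\in \uhmm{}$ the first $\ell$ levels of recursion invoke the same Toom-$k$ uniformly, so at the $i$-th level there are exactly $(2k-1)^i$ sub-problems each of input size $\Theta(n/k^i)$; this behavior continues until the sub-problem size drops to at most $\nz{}$, whereupon every leaf sub-problem is computed by an algorithm from the standard class.

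First I would treat the case $n/P^{1/\log_k(2k-1)} \geq \nz{}$ by picking $n'= n/P^{1/\log_k(2k-1)}$. Since $n' \geq \nz{}$, the recursion is still using Toom-$k$ at the level where sub-problem sizes first fall below $n'$, so a direct count yields $\nu(n') = \Theta\bigl((n/n')^{\log_k(2k-1)}\bigr) = \Theta(P)$. Substituting into Theorem~\ref{thm:memindbound}, the summand $\nu(n')/(2P)$ is $\Theta(1)$, which drives the inner minimum to a positive constant and gives a positive contribution of order $n' = \Theta\bigl(n/P^{1/\log_k(2k-1)}\bigr)$. Because $\log_k(2k-1) > 1$ for every integer $k\geq 2$, we have $P^{1/\log_k(2k-1)} = o(P)$, so this term strictly dominates the additive correction $\alpha n/P$. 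Dividing by $\msgsize{}$ gives the first branch of the theorem.

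Next I would treat the case $\nz{} > n/P^{1/\log_k(2k-1)}$ by picking $n' = \nz{}$. Here the Toom-$k$ recursion continues uniformly until every leaf sub-problem has size at most $\nz{}$ and is then computed by an algorithm from the standard class, producing $\nu_1(\nz{}) = \Theta\bigl((n/\nz{})^{\log_k(2k-1)}\bigr)$ Type 1 MSPs. In this regime the summand $\sqrt{\nu_1(n')/(8P)}$ dominates inside the minimum of Theorem~\ref{thm:memindbound}, so the positive contribution becomes $\nz{}\sqrt{\nu_1(\nz{})/P} = \Theta\bigl((n/\nz{})^{\log_k(2k-1)/2}\,\nz{}/\sqrt{P}\bigr)$.

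The main technical check will be verifying, in this second case, that this positive term dominates the additive $\alpha n/P$ correction. Setting $\gamma = \log_k(2k-1)/2$, so that $1/2 < \gamma < 1$ for every integer $k\geq 2$ (since $k < 2k-1 < k^2$), the required inequality reduces to $(\nz{}/n)^{1-\gamma}\sqrt{P} = \omega(1)$, which follows from the hypothesis $\nz{} > n/P^{1/(2\gamma)}$ because that hypothesis implies $(\nz{}/n)^{1-\gamma}\sqrt{P} > P^{(2\gamma-1)/(2\gamma)}$ and the exponent is strictly positive. A brief sanity check confirms that the two chosen values of $n'$ are (near-)optimal within the outer $\max$ of Theorem~\ref{thm:memindbound}: picking $n'$ much larger gives $\nu(n')/P$ far below $1$ so the positive contribution shrinks, while picking $n'$ much smaller makes $n'$ itself too small to beat the $\alpha n/P$ subtraction. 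Dividing by $\msgsize{}$ completes the second case and yields the claimed bound.
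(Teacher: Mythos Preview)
Your proposal is correct and follows essentially the same approach as the paper: apply Theorem~\ref{thm:memindbound} with the choice $n' = n/P^{1/\log_k(2k-1)}$ in the first case (yielding $\nu(n')=\Theta(P)$) and $n' = \nz{}$ in the second (yielding $\nu_1(\nz{})=\Theta((n/\nz{})^{\log_k(2k-1)})$), then simplify. Your write-up is in fact more complete than the paper's sketch, since you explicitly verify that the subtracted term $\alpha n/P$ is dominated in both regimes (using $\log_k(2k-1)>1$), a point the paper leaves implicit.
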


 For the case $nP^{-1/\log_k (2k-1)}\geq n_0$, the result of Theorem~\ref{thm:memindbound} yields a memory-independent \io{} lower bound for the class of parallel ``\emph{uniform}'' Toom-Cook algorithms studied in~\cite{BilardiS19}.

As, by definition of Type 1 MSP, any algorithm from the standard class used to multiply $n$-digits integers generates a single Type 1 MSP $n$-MSP, form Theorem~\ref{thm:memindbound} we obtain a memory-independent lower bound on the \io{} complexity of any parallel implementation of an integer multiplication algorithm from the standard class:

 \begin{corollary}\label{cor:mindsta}
Let $\mathcal{A}$ be any standard integer multiplication algorithm to multiply two integers $A,B$ represented as $n$-digit base-$s$ numbers using $\nproc{}$ processors each equipped with an unbounded local memory. Assume further that at the beginning of $\mathcal{A}$ no processor has more than $\alpha n/\nproc{}$ (where $\alpha$ is a constant with respect to $n/\nproc{}$) digits of each of the input integers stored in its local memory, and that for each \io{} operation it is possible to move up to $\msgsize{}$ memory words. $\mathcal{A}$'s \io{} complexity satisfies:
\begin{equation*}
	IO_{\mathcal{A}}\left(n,\nproc{},\msgsize{}\right) \geq \BOme{\frac{n}{\msgsize{}\sqrt{\nproc{}}}}.
\end{equation*} 
 \end{corollary}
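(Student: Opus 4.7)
The plan is to obtain Corollary~\ref{cor:mindsta} as a direct specialization of Theorem~\ref{thm:memindbound}. The crucial first observation is that, by Definition~\ref{def:maxsubp}, when $\mathcal{A}$ belongs to the standard class the entire computation is itself the unique improper Type 1 $n$-MSP generated by $\mathcal{A}$, and no Type 2 MSP is produced since $\mathcal{A}$ performs no Toom-Cook recursion step (the ancestor condition in Definition~\ref{def:maxsubp} is vacuous for the root problem). Hence $\nu_1(n) = 1$, $\nu_2(n) = 0$, and $\nu(n) = 1$.

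Having established the MSP counts, I would instantiate the maximum in the bound of Theorem~\ref{thm:memindbound} with the single choice $n' = n$. With $\nu_1(n) = \nu(n) = 1$, the quantity inside the minimum becomes
\begin{equation*}
\sqrt{\frac{1}{8\nproc{}}} + \frac{1}{2\nproc{}},
\end{equation*}
which is $\Theta\left(1/\sqrt{\nproc{}}\right)$. Thus for $\nproc{}$ sufficiently large the minimum equals this second expression, while for very small $\nproc{}$ the minimum saturates at the constant $1/4$, which is likewise $\Omega\left(1/\sqrt{\nproc{}}\right)$. Multiplying through by $n' = n$ yields a principal contribution of $\Omega\left(n/\sqrt{\nproc{}}\right)$.

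Finally I would absorb the additive correction $\alpha n/\nproc{}$ appearing in the statement of Theorem~\ref{thm:memindbound}. Since $\alpha$ is constant with respect to $n/\nproc{}$ and $n/\nproc{} = o\left(n/\sqrt{\nproc{}}\right)$ for all $\nproc{} \geq 2$, this term is of strictly lower order than the main term and can be absorbed into the $\Omega(\cdot)$. Dividing by $\msgsize{}$ then delivers the claimed $\Omega\left(n/(\msgsize{}\sqrt{\nproc{}})\right)$ bound. The derivation is essentially arithmetic; the only mildly delicate point is handling the regime change in the minimum between the constant $1/4$ branch (small $\nproc{}$) and the $\Theta\left(1/\sqrt{\nproc{}}\right)$ branch (large $\nproc{}$), and verifying that the additive $\alpha n/\nproc{}$ correction is dominated in both regimes. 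No new combinatorial content beyond Theorem~\ref{thm:memindbound} is required.
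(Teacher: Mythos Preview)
Your proposal is correct and follows exactly the route the paper takes: the paper simply observes that a standard algorithm generates a single Type~1 $n$-MSP and then invokes Theorem~\ref{thm:memindbound}, with no further detail. Your write-up supplies the arithmetic the paper leaves implicit; the only minor wording issue is calling $n/\nproc{} = o(n/\sqrt{\nproc{}})$ ``for all $\nproc{}\geq 2$'' (little-$o$ is asymptotic), but the intended domination is clear and matches the paper's level of rigor.
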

 
 For $\msgsize{}=1$, the results in this section yield  lower bounds to the \emph{bandwidth cost} (i.e., number of memory words exchanged by at least one processor) of the respective class of algorithms.

\subsection{Memory-independent \io{} lower bounds in the parallel model\\ assuming balanced computation}\label{sec:mindcomp}
We now present alternative memory-independent \io{} lower bounds for parallel algorithms in $\hmm{}$. For these results, we assume that the processors share the computational load of the algorithm in a \emph{balanced} way. However, we do not assume any bound on the available local memory (which can be assumed to be unbounded). Differently from the previous case, we do not assume that the input is initially partitioned in a balanced fashion among the processor, but only that the input values are not replicated among multiple processors at the beginning of the computation (i.e., we assume that at the beginning of the computation each digit of the input integers is stored in the local memory of a single processor). Such criterion is of natural interest as a balanced distribution of the computational load among the processors is necessary in order to achieve optimal speedup. 

We say that an algorithm $\mathcal{A}$ from the standard class is $\beta$-computationally balanced, for $\beta \in (0,1]$, if, when computing the product of two $n$-digit integers, each processor computes at least a fraction $\beta\frac{n^2}{\nproc{}}$ of the $n^2$ \emph{elementary products} among digits of the input to be computed. Clearly, any algorithm for which at least one of the processors computes a fraction $\beta\frac{n^2}{\nproc{}}$ of the $n^2$ elementary products will have parallel execution time at least $\BOme{\frac{n^2(\nproc{}-\beta)}{\nproc{}(\nproc{}-1)}}$. Towards achieving optimal speedup it therefore appears desirable to have $\beta$ close to one. 

\begin{theorem}\label{thm:balancompsta}
Let $\mathcal{A}$ be any $\beta$-computationally balanced standard integer multiplication algorithm to multiply two integers $A,B$ represented as $n$-digit base-$s$ numbers using $\nproc{}$ processors each equipped with an unbounded local memory. Assume further that at the beginning of $\mathcal{A}$  each digit of the input integers is stored in the local memory of a single processor. $\mathcal{A}$'s \io{} complexity satisfies:
\begin{equation*}
	IO_{\mathcal{A}}\left(n,\nproc{},\msgsize{}\right) \geq \BOme{\sqrt{\beta}\frac{n}{\sqrt{\nproc{}}\msgsize{}} - \frac{2n}{\nproc{}\msgsize{}}}
\end{equation*}
Where $\msgsize{}$ denotes the number of memory words which can be transmitted in a single \io{} operation.
 \end{theorem}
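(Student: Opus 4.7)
The plan is to adapt the memory-independent argument of Theorem~\ref{thm:memindbound} to the $\beta$-computationally balanced setting, replacing the per-processor input memory constraint with a pigeonhole bound on the initial input holdings of a well-chosen processor. Let $G^{\mathcal{A}}$ denote the CDAG associated with $\mathcal{A}$ according to the construction in Section~\ref{sec:CDAG}.

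First I would use the no-replication hypothesis: since the $2n$ input digits are distributed without replication among the $\nproc{}$ processors, by pigeonhole there exists a processor $\nproc{}^*$ whose local memory initially contains at most $2n/\nproc{}$ input digits. By the $\beta$-computational balance assumption the same processor must compute at least $\beta n^2/\nproc{}$ elementary products $A[j]B[k]$; denote by $Y$ the corresponding set of product-vertices of $G^{\mathcal{A}}$ that are evaluated at $\nproc{}^*$.

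Next, for the dominator argument, let $\alpha_A$ (resp.\ $\alpha_B$) be the number of distinct digits of $A$ (resp.\ $B$) appearing as an operand of some product in $Y$. Since each elementary product uses exactly one digit of each input, $\alpha_A\,\alpha_B \geq |Y| \geq \beta n^2/\nproc{}$, whence $\max\{\alpha_A,\alpha_B\} \geq \sqrt{\beta}\,n/\sqrt{\nproc{}}$. A straightforward extension of Lemma~\ref{lem:domtype3} — whose argument uses only the fact that every elementary-product vertex is reached from the inputs exclusively through its two operand-digit vertices, and therefore applies to any set of elementary products rather than only those in the middle range $T_i$ — then gives that any dominator $\dom{}$ of $Y$ with respect to the full set of input vertices satisfies $|\dom{}| \geq \max\{\alpha_A,\alpha_B\} \geq \sqrt{\beta}\,n/\sqrt{\nproc{}}$. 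The remaining step follows the standard parallel-to-dominator translation used in the proof of Theorem~\ref{thm:genmatmul}: in order for $\nproc{}^*$ to evaluate every vertex in $Y$, every path in $G^{\mathcal{A}}$ from an input vertex to a vertex of $Y$ must contain at least one vertex whose value is made available in the local memory of $\nproc{}^*$, either as an input initially stored there or as a value received via a message. Hence the union of $\nproc{}^*$'s initial input digits (at most $2n/\nproc{}$ by our choice) with the set of values received at $\nproc{}^*$ during the execution constitutes a dominator for $Y$ with respect to the inputs, so that $\nproc{}^*$ must receive at least $\sqrt{\beta}\,n/\sqrt{\nproc{}} - 2n/\nproc{}$ memory words; dividing by $\msgsize{}$ yields the claimed lower bound on the bandwidth cost.

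The main obstacle I expect is purely expository: verifying that the extension of Lemma~\ref{lem:domtype3} to arbitrary (not necessarily middle-range) elementary products goes through without modification. The original proof in Appendix~\ref{app:proof lemma} relies only on the combinatorial structure of the standard class — specifically, that each elementary product vertex has in-edges solely from its two operand digits — and therefore does not depend on $T'_i$ lying inside $T_i$; no new ideas beyond those already in that proof should be required. A minor caveat worth flagging is that the bound is nontrivial only in the regime $\beta > 4/\nproc{}$, i.e., when a genuinely balanced participation is actually imposed; otherwise the subtractive $2n/(\nproc{}\msgsize{})$ term dominates.
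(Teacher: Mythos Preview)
Your proposal is correct and follows essentially the same route as the paper: select by pigeonhole a processor $\nproc{}^*$ holding at most $2n/\nproc{}$ input digits, use $\beta$-balance to lower-bound the number of elementary products it computes, bound $\max\{\alpha_A,\alpha_B\}$ below by $\sqrt{\beta}\,n/\sqrt{\nproc{}}$, and apply the dominator argument of Lemma~\ref{lem:domtype3} to conclude that $\nproc{}^*$ must receive at least $\sqrt{\beta}\,n/\sqrt{\nproc{}} - 2n/\nproc{}$ words. Your explicit remark that Lemma~\ref{lem:domtype3} requires the (trivial) extension to arbitrary elementary products rather than only those in $T_i$ is in fact more careful than the paper's own proof, which applies the lemma without comment and even cites Lemma~\ref{lem:domtype1} in what appears to be a typo.
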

\begin{proof}
We prove the statement for $\msgsize{}=1$. The general case follows a simple generalization.
By the assumption of non-replication of the input, at least one of the $\nproc{}$ processor, henceforth denoted as $\nproc{}^*$, holds no more than $2n/\nproc{}$ of the $2n$ digits of the input. As, $\mathcal{A}$ is an $\beta$-computationally balanced algorithm, by definition, $\nproc{}^*$ computes at least $\beta\frac{n^2}{\nproc{}}$ of the $n^2$ elementary products required for obtaining the product $C=A\times B$. Let $G^{\mathcal{A}}$, be the CDAG corresponding to algorithm $\mathcal{A}$. The input vertices of $G^{\mathcal{A}}$ correspond each to one of the digits of the input integers. Among the vertices of $G^{\mathcal{A}}$, there are $n^2$ which correspond each to one of the elementary products computed by the algorithm. Let $T$ denote the set of vertices corresponding to the at least $\beta\frac{n^2}{\nproc{}}$ elementary product computed by $\nproc{}^*$. By Definition~\ref{def:maxsubp}, the entire integer multiplication problem being considered is the unique, improper, Type 1 $n$-MSP generated by $\mathcal{A}$. 
By the definition of the standard algorithm class, among the $n^2$ elementary products, each digit of $A$ is multiplied by all $n$ distinct digits of $B$ and vice versa. Hence, the number $a$ (resp., $b$) of digits of $A$ (resp., $B$) a which are multiplied in at least one of the $\beta\frac{n^2}{\nproc{}}$ elementary products computed by $\nproc{}^*$ must satisfy $\max\{a,b\}\geq \sqrt{|T|} = \sqrt{\beta\frac{n^2}{\nproc{}}}$.
Thus, by Lemma~\ref{lem:domtype1}, any dominator $D$ of $T$ must satisfy $|D|\geq \max\{a,b\}$.
As previously discussed, at most input values corresponding to the digits of the input integers $A$ and $B$ are stored in the local memory of $\nproc{}$ at the beginning of the computation of $\mathcal{A}$.    By Definition~\ref{def:dominator}, this implies that there are at least $ n\sqrt{\beta,\nproc{}}$ vertex-disjoint paths connecting vertices in $T$ to the input vertices of $G^{\mathcal{A}}$, and, in particular, at least $\sqrt{\beta,\nproc{}}-2n/\nproc{}$ such paths connect vertices of $T$ to input vertices corresponding to values not initially stored in the local memory of $\nproc{}^*$.  Let $L$ denote the set of vertices corresponding to values which are either received or sent by $\nproc{}^*$ during the computation of $\mathcal{A}$ by communicating with other processors. In order for $\nproc{}^*$ to compute the values corresponding to the vertices in $T$, there must be no path connecting inputs of $G^{\mathcal{A}}$ not initially stored in the local memory of $\nproc{}^*$ to vertices in $T$. Thus, by the previous considerations, we  have
\begin{equation*}
    |L|\geq  \sqrt{\beta\frac{n^2}{\nproc{}}} - \frac{2n}{\nproc{}}.
\end{equation*}
\end{proof}
The result in Theorem~\ref{thm:balancompsta} provides insight into the trade off between the balance of the computational load and the \io{} complexity of standard integer multiplication algorithms in the parallel distributed memory-setting: the more balanced the computation (i.e., for $\beta$ close to 1) the \io{} requirement grows. Vice versa, if the computation is distributed in rather unbalanced way (i.e., for $\beta$ close to 0), the \io{} requirement is lower but the parallel computation time, bounded by $\BOme{\frac{n^2(\nproc{}-\beta)}{\nproc{}(\nproc{}-1)}}$, grows correspondingly. 

The result in Theorem~\ref{thm:balancompsta} can be straightforwardly extended to hybrid algorithms in the uniform-non stationary class $\uhmm{}$, for $n_0> n{P^{-1/\log_k (2k-1)}}$. Algorithms in this class use standard integer multiplication algorithms to compute the  $\BOme{\left(\frac{n}{n_0}\right)^{\frac{\log_k (2k-1)}{2}} n_0^2}$ elementary products of the input values of the $\BOme{\left(\frac{n}{n_0}\right)^{\frac{\log_k (2k-1)}{2}}}$ sub-problems with input size at least $n_0$ generated after the initial $\BOme{\log_k n/n_0}$ recursive steps using the Toom-$k$. We say that an algorithm $\mathbf{A}$ is $\beta$-balanced , for $\beta\in(0,1]$, if each processor computes at least a fraction $\BOme{\beta\left(\frac{n}{n_0}\right)^{\frac{\log_k (2k-1)}{2}} \frac{n_0^2}{\nproc{}}}$ of the \emph{elementary products}. 
Following the reasoning in Theorem~\ref{thm:memindbound} and Theorem~\ref{thm:balancompsta} we have:

\begin{corollary}\label{cor:specmemindboundbalanced2}
Let $\mathcal{A}\in \uhmm{}$, where $n_0> n{P^{-1/\log_k (2k-1)}}$, be a $\beta$-balanced algorithm to multiply $A,B$ represented as $n$-digit base-$s$ numbers using $\nproc{}$ processors each equipped unbounded local memory. Assume further that at the beginning of $\mathcal{A}$  each digit of the input integers is stored in the local memory of a single processor. $\mathcal{A}$'s \io{} complexity satisfies:
\begin{equation*}
	IO_{\mathcal{A}}\left(n,\nproc{},\msgsize{}\right) \geq 
	\BOme{\left(\frac{n}{n_0}\right)^{\frac{\log_k (2k-1)}{2}} \frac{\sqrt{\beta}n_0}{\msgsize{}\sqrt{\nproc{}}}}
\end{equation*}
Where $\msgsize{}$ denotes the number of memory words which can be transmitted in a single \io{} operation.
\end{corollary}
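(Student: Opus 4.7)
My plan is to combine the digit-counting argument of Theorem~\ref{thm:balancompsta} with the per-MSP decomposition used in Theorem~\ref{thm:memindbound}. First, by the non-replication of the input and pigeonhole, I pick a processor $P^*$ whose local memory initially contains at most $2n/\nproc{}$ of the $2n$ original input digits. Since the Type 1 $\nz{}$-MSPs generated by $\mathcal{A}$ together contain $\Theta(\nu_1(\nz{})\nz{}^2)$ elementary products, with $\nu_1(\nz{})=\Omega((n/\nz{})^{\log_k(2k-1)})$, the $\beta$-balance hypothesis forces $P^*$ to evaluate at least $T=\Omega(\beta\,\nu_1(\nz{})\,\nz{}^2/\nproc{})$ such products, distributed across these MSPs.

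For each MSP $P_i$ let $t_i$ denote the number of elementary products that $P^*$ evaluates inside $P_i$, so that $\sum_i t_i = T$. Since each such product is of the form $A_i[j]B_i[k]$, the $t_i$ products performed within MSP $i$ must access at least $\max\{a_i,b_i\}\geq\sqrt{t_i}$ distinct digits of $A_i$ or $B_i$; these digits form a set $Y_i$ of input vertices of the sub-CDAG $G^{\mathcal{A}}_{P_i}$. Combining Lemma~\ref{lem:newconnect} (together with its symmetric version for the most significant half of the digits, at the cost of a constant factor via a pigeonhole argument) with the vertex-disjointness of distinct MSPs' sub-CDAGs guaranteed by Lemma~\ref{lem:distinct}, any dominator in $G^{\mathcal{A}}$ of $P^*$'s computed products with respect to the original input vertices of $A,B$ has size at least $\sum_i \sqrt{t_i}$.

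The crux is then an extremal calculation: since $\sqrt{\cdot}$ is concave and each $t_i$ is capped at the $\Theta(\nz{}^2)$ total elementary products of a single MSP, the adversary minimizes $\sum_i\sqrt{t_i}$ by concentrating its workload on as few MSPs as possible. The hypothesis $\nz{}>n\,\nproc{}^{-1/\log_k(2k-1)}$ forces $\nu_1(\nz{})\leq\nproc{}$, so $T\leq\beta\,\nz{}^2$ and the adversary can fit all of its work in a single MSP, giving $\sum_i\sqrt{t_i}\geq\Omega(\sqrt{T})=\Omega\bigl(\nz{}\,(n/\nz{})^{\log_k(2k-1)/2}\sqrt{\beta/\nproc{}}\bigr)$. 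Subtracting the at most $2n/\nproc{}$ input digits that $P^*$ can supply from its local memory, which is of lower order than the target in this parameter range, and dividing by the message size $\msgsize{}$ then yields the claimed bound. I expect the main obstacle to lie in this extremal step: verifying carefully that, given the per-MSP cap $t_i\leq\Theta(\nz{}^2)$ and the bound on $T$ in the stated regime, no distribution of $P^*$'s workload across the MSPs can push $\sum_i\sqrt{t_i}$ below the advertised asymptotic, and that the two asymmetric cases (high-order vs.\ low-order digits accessed in each MSP) can both be routed through the connectivity properties of the encoder without loss.
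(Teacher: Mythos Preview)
Your overall strategy---pick a processor $P^*$ with at most $2n/\nproc{}$ initial inputs, count the $T=\Omega(\beta\nu_1(\nz{})\nz^2/\nproc{})$ elementary products it must evaluate, and lower-bound the dominator of those products---mirrors the paper's approach, which simply says the result follows ``following the reasoning in Theorem~\ref{thm:memindbound} and Theorem~\ref{thm:balancompsta}.'' But the specific way you aggregate across MSPs has a real gap.

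You claim that any dominator in $G^{\mathcal{A}}$ of $P^*$'s products, with respect to the global input vertices, has size at least $\sum_i\sqrt{t_i}$, and you justify this by combining Lemma~\ref{lem:newconnect} with the vertex-disjointness of the MSP sub-CDAGs from Lemma~\ref{lem:distinct}. This does not follow. Lemma~\ref{lem:distinct} only guarantees that the MSP sub-CDAGs themselves are pairwise disjoint; the vertex-disjoint paths furnished by Lemma~\ref{lem:newconnect} run \emph{outside} the MSPs, through the shared encoder sub-CDAGs, and nothing in the lemmas says that the path families for different MSPs are mutually disjoint. Concretely, a single global input digit (or a single encoder vertex) can sit on paths feeding several MSPs at once, so one vertex in a dominator outside the MSPs can ``pay for'' contributions to several of the $\sqrt{t_i}$ simultaneously. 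Your extremal step is then analyzing the wrong quantity: once the per-MSP contributions cannot be summed, the adversary's best move is to \emph{spread} the workload evenly (driving $\max_i\sqrt{t_i}$ down to $\sqrt{T/\nu_1(\nz{})}$), not to concentrate it, and a purely dominator-based argument no longer reaches $\sqrt{T}$.

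The paper's route through Theorem~\ref{thm:memindbound} handles exactly this spread-out regime with a different mechanism: the partial-accumulator argument. If every MSP has few accessed digits (all $\alpha_i,\beta_i$ below the $\sqrt{T}$-type threshold), then $P^*$ cannot complete any of the output digits $C_i[k]$ it contributes to, and the count $\sum_i\gamma_i\ge\sum_i t_i/\max\{\alpha_i,\beta_i\}$ of such ``active'' digits forces $\Omega(\sqrt{T})$ send/receive operations for partial sums. This case split (one MSP with many accessed digits versus all MSPs with few) is what rescues the bound without ever needing to sum dominator contributions across MSPs. Your proposal omits this ingredient, and without it the argument does not close.
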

 
 For $\msgsize{}=1$, the results in this section yield  lower bounds to the \emph{bandwidth cost} (i.e., number of memory words exchanged by at least one processor) of the respective class of algorithms.
 
 Interestingly, for values of $\beta$ (resp., $\alpha$) which are constant with respect to $n,n_0$ and $\nproc{}$, the bound  in Theorem~\ref{thm:balancompsta} (resp., Corollary~
 \ref{cor:specmemindboundbalanced2}) corresponds asymptotically to that obtained for the input-balanced parallel computations of standard integer algorithms (resp., algorithms in $\uhmm{}$, where $n_0> \frac{n}{P^{1/\log_k (2k-1)}}$) in Corollary~\ref{cor:mindsta} (resp., Theorem~\ref{thm:specmemindbound}). This suggests that the two notion of \emph{balance} being considered have a similar effect on the \io{} complexity of integer multiplication algorithms.

\subsection{On the tightness of the bounds} 

As  \io{} efficient sequential implementations of the \emph{Grid method multiplication} execute at least $\BO{n^2/M}$ \io{} operations, we can conclude that such implementations are asymptotically optimal, and that the \io{} lower bound in~\eqref{eq:corstanda} for the sequential model is asymptotically tight.

In~\cite{BilardiS19}, Bilardi and De Stefani presented a sequential version of the Toom-Cook algorithm which is ``uniform, non-stationary'', henceforth referred to as $\mathcal{A}_{\mathcal{BD}}$. At each level of recursion $\mathcal{A}_{\mathcal{BD}}$ uses the same version of Toom-$k$ (with the same value $k$)for all the sub-problems generated in all recursion branches, while different versions of the Toom-$k$ scheme may be used in different levels of recursion. The algorithm proceeds according to such a scheme until one such recursive level for which all the generated sub-problems can be computed in the cache. According to our definition of MSP, $\mathcal{A}_{\mathcal{BD}}$ generates only Type-2 MSPs, and as shown in~\cite[Theorem 5.1 and Lemma 5.2]{BilardiS19}, the number of \io{} operations executed by it is within a $\BO{k_{max}^2}$ multiplicative of the \io{} lower bound in Theorem~\ref{thm:genmatmul}, provided that the size of the available cache memory is such that $M\geq \BOme{k^3\log_sk}$. 
Through simple, albeit tedious, modifications of algorithm $\mathcal{A}_{\mathcal{BD}}$, is it possible to obtain non-uniform, non-stationary algorithms for integer multiplication using the Toom-Cook scheme $\hmm{}$ whose \io{} complexity is is within a $\BO{k_{max}^2}$ multiplicative of the \io{} lower bound in Theorem~\ref{thm:genmatmul}. Let us refer to such modified algorithms as $\mathcal{A}^*_{\mathcal{BD}}$. $\mathcal{A}^*_{\mathcal{BD}}$ follows the same steps as $\mathcal{A}_{\mathcal{BD}}$ but, according to the instruction function $f_{\mathcal{A}^*_{\mathcal{BD}}}$, may use different versions of Toom-$k$ even for sub-problems with inputs of the same size generated at the same levels of recursion. $\mathcal{A}^*_{\mathcal{BD}}$ computes entirely in cache sub-problem whose input size fits in the available cache.  

An opportune composition of algorithm $\mathcal{A}^*_{\mathcal{BD}}$ (for different choices of the associated instruction function) with the \io{} optimal sequential Grid method multiplication algorithm lead to  sequential hybrid algorithms  in $\hmm{}$ (resp., $\uhmm{}$) whose \io{} cost asymptotically is within an $\BO{k_{max}^2}$ multiplicative factor of the \io{} complexity sequential lower bounds in Theorem~\ref{thm:genmatmul} and Theorem~\ref{thm:corgenmatmul}. For constant values of $k_{max}$ (with respect to $n,M$), such hybrid algorithms are \io{} optimal and the corresponding lower bounds are asymptotically tight. Interestingly, as all the mentioned algorithms from $\hmm{}$ and $\uhmm{}$ do not recompute any intermediate value, we can conclude that using recomputation may lead to at most a $\BO{k_{max}^2}$ multiplicative factor reduction of the \io{} cost of hybrid algorithms in $\hmm{}$ and $\uhmm{}$.

In~\cite{lds21}, De Stefani presented parallel versions of the standard recursive integer multiplication algorithm and of Karatsuba's algorithm for the distributed memory setting. Under mild conditions (i.e., $n\geq \nproc$ and $M\geq \log_2 \nproc$), these algorithms exhibit optimal computational speedup and their bandwidth cost (i.e., the number of memory words being either sent or received by at least one processor) matches asymptotically the corresponding lower, both memory-dependent and memory-independent bounds, discussed in this work when considering Karatsuba's algorithm as a special case of Toom-$2$. Further, the \emph{latency cost} of the mentioned algorithms is within a $\BO{\log_2^2 \nproc{}}$ multiplicative factor of the corresponding lower bounds in this paper. Interestingly, the algorithms in~\cite{lds21} employ a balanced distribution of the input, and a balanced distribution of the computational load. By opportunely combining these algorithms, it is is possible to obtain hybrid uniform, non-stationary integer multiplication algorithms combining standard and Karatsuba's algorithm. The bandwidth cost of such hybrid algorithms matches asymptotically the corresponding lower bounds discussed in this work (i.e., Theorem~\ref{thm:corgenmatmul}, Theorem~\ref{thm:specmemindbound} and Corollary~\ref{cor:specmemindboundbalanced2}), while their latency cost is within a $\BO{\log_2^2 \nproc{}}$ multiplicative factor of the proposed lower bounds.

Hence, we have that our lower bounds for the parallel setting (for the case Toom-$2$), both memory-dependent and memory-independent, are asymptotically tight for what pertains the bandwidth cost, and within a multiplicative $\BO{\log_2^2 \nproc{}}$ of the optimal latency.

\section{Conclusion}
This work presented a characterization of the \io{}
complexity of a general class of non uniform, non stationary hybrid integer multiplication algorithms combining fast Toom-Cook algorithms with standard algorithms. Our work further showcase the generality of the approach based on the analysis of \emph{Maximal Sub-Problems} introduced in~\cite{dest19} by showing how it can be used to also obtain memory -independent \io{} lower bounds in the parallel setting. While the presented lower bounds are (almost) asymptotically
tight for the sequential setting, an important open question is whether it is possible to a general parallel version of Toom-Cook algorithms whose \io{} cost (both bandwith and latency) asymptotically matches  the lower bounds presented in this work.


\clearpage
\bibliographystyle{plain}
\bibliography{bibliography}

\begin{thebibliography}{10}

\bibitem{Aggarwal:1988:ICS:48529.48535}
Alok Aggarwal and S.~Vitter, Jeffrey.
\newblock The {I}nput/{O}utput complexity of sorting and related problems.
\newblock {\em Commun. ACM}, 31(9):1116--1127, September 1988.

\bibitem{ballard2012brief}
G.~Ballard, J.~Demmel, O.~Holtz, B.~Lipshitz, and O.~Schwartz.
\newblock Brief announcement: {S}trong scaling of matrix multiplication
  algorithms and memory-independent communication lower bounds.
\newblock In {\em Proc. ACM SPAA}, pages 77--79. ACM, 2012.

\bibitem{ballard2012graph}
G.~Ballard, J.~Demmel, O.~Holtz, and O.~Schwartz.
\newblock Graph expansion and communication costs of fast matrix
  multiplication.
\newblock {\em JACM}, 59(6):32, 2012.

\bibitem{bilardi1995horizons}
G.~Bilardi and F.~P. Preparata.
\newblock Horizons of parallel computation.
\newblock {\em Journal of Parallel and Distributed Computing}, 27(2):172--182,
  1995.

\bibitem{bilardi2017complexity}
Gianfranco Bilardi and Lorenzo De~Stefani.
\newblock The {I}/{O} complexity of {S}trassen’s matrix multiplication with
  recomputation.
\newblock In {\em Proc. WADS}, pages 181--192. Springer, 2017.

\bibitem{BilardiS19}
Gianfranco Bilardi and Lorenzo De~Stefani.
\newblock The {I/O} complexity of {T}oom-{C}ook integer multiplication.
\newblock In {\em Proc. {ACM-SIAM} {SODA}}, pages 2034--2052, 2019.

\bibitem{bodrato2007towards}
Marco Bodrato.
\newblock Towards optimal {T}oom-{C}ook multiplication for univariate and
  multivariate polynomials in characteristic 2 and 0.
\newblock In {\em International Workshop on the Arithmetic of Finite Fields},
  pages 116--133. Springer, 2007.

\bibitem{cook}
Stephen~A Cook and St{\aa}l~O Aanderaa.
\newblock On the minimum computation time of functions.
\newblock {\em Transactions of the American Mathematical Society},
  142:291--314, 1969.

\bibitem{covanov2019fast}
Svyatoslav Covanov and Emmanuel Thom{\'e}.
\newblock Fast integer multiplication using$\backslash$goodbreak generalized
  {F}ermat primes.
\newblock {\em Mathematics of Computation}, 88(317):1449--1477, 2019.

\bibitem{dest19}
Lorenzo De~Stefani.
\newblock The {I/O} complexity of {H}ybrid {A}lgorithms for {S}quare {M}atrix
  {M}ultiplication.
\newblock In {\em 30th International Symposium on Algorithms and Computation,
  {ISAAC}}, pages 33:1--33:16, 2019.

\bibitem{lds21}
Lorenzo De~Stefani.
\newblock Communication-optimal parallel standard and karatsuba integer
  multiplication in the distributed memory model.
\newblock Unpublished manuscript currently in submission. Avaliable online at
  https://lorenzoat-brown.000webhostapp.com/files/DSOPTIM.pdf, July 15, 2020.

\bibitem{furer2009faster}
Martin F{\"u}rer.
\newblock Faster integer multiplication.
\newblock {\em SIAM Journal on Computing}, 39(3):979--1005, 2009.

\bibitem{garcia2005can}
L~Garcia.
\newblock Can {S}ch{\"o}nhage multiplication speed up the {RSA} encryption or
  decryption?
\newblock {\em University of Technology, Darmstadt}, 2005.

\bibitem{harvey2019faster}
David Harvey and Joris van~der Hoeven.
\newblock Faster integer multiplication using short lattice vectors.
\newblock {\em The Open Book Series}, 2(1):293--310, 2019.

\bibitem{harvey2019integer}
David Harvey and Joris Van Der~Hoeven.
\newblock Integer multiplication in time o (n log n).
\newblock 2019.

\bibitem{harvey2016even}
David Harvey, Joris Van Der~Hoeven, and Gr{\'e}goire Lecerf.
\newblock Even faster integer multiplication.
\newblock {\em Journal of Complexity}, 36:1--30, 2016.

\bibitem{jia1981complexity}
J.~Hong and H.~Kung.
\newblock I/{O} complexity: {T}he red-blue pebble game.
\newblock In {\em Proc. ACM STOC}, pages 326--333. ACM, 1981.

\bibitem{irony2004communication}
D.~Irony, S.~Toledo, and A.~Tiskin.
\newblock Communication lower bounds for distributed-memory matrix
  multiplication.
\newblock {\em Journal of Parallel and Distributed Computing},
  64(9):1017--1026, 2004.

\bibitem{karatsuba1962multiplication}
Anatolii Karatsuba and Yuri Ofman.
\newblock Multiplication of many-digital numbers by automatic computers.
\newblock In {\em Doklady Akad. Nauk SSSR}, volume 145, page~85, 1962.

\bibitem{knuth1981seminum}
Donald~E Knuth.
\newblock {\em The {A}rt of {C}omputer {P}rogramming, {V}olume 2:
  {S}eminumerical {A}lgorithms, {P}art 1}.
\newblock Addison Wesley, 1981.

\bibitem{macon1958inverses}
Nathaniel Macon and Abraham Spitzbart.
\newblock Inverses of {V}andermonde matrices.
\newblock {\em The American Mathematical Monthly}, 65(2):95--100, 1958.

\bibitem{patterson2005getting}
C.~A. Patterson, M.~Snir, and S.~L. Graham.
\newblock {\em Getting {U}p to {S}peed:: The {F}uture of {S}upercomputing}.
\newblock National Academies Press, 2005.

\bibitem{savage1995extending}
J.~E. Savage.
\newblock Extending the {H}ong-{K}ung model to memory hierarchies.
\newblock In {\em Computing and Combinatorics}, pages 270--281. Springer, 1995.

\bibitem{savage97models}
J.~E. Savage.
\newblock {\em Models of {C}omputation: {E}xploring the {P}ower of
  {C}omputing}.
\newblock Addison-Wesley Longman Publishing Co., Inc., Boston, MA, USA, 1st
  edition, 1997.

\bibitem{schonhage1971schnelle}
Arnold Sch{\"o}nhage and Volker Strassen.
\newblock Schnelle multiplikation grosser zahlen.
\newblock {\em Computing}, 7(3-4):281--292, 1971.

\bibitem{scquizzato_et_al:LIPIcs:2014:4493}
Michele Scquizzato and Francesco Silvestri.
\newblock {Communication {L}ower {B}ounds for {D}istributed-{M}emory
  Computations}.
\newblock In {\em STACS}, pages 627--638, 2014.

\bibitem{toom1963complexity}
Andrei~L Toom.
\newblock The complexity of a scheme of functional elements realizing the
  multiplication of integers.
\newblock In {\em Soviet Mathematics Doklady}, volume~3, pages 714--716, 1963.

\bibitem{yang1988critical}
C.~Yang and B.~P. Miller.
\newblock Critical path analysis for the execution of parallel and distributed
  programs.
\newblock In {\em Distributed Computing Systems, 1988., 8th International
  Conference on}, pages 366--373. IEEE, 1988.

\end{thebibliography}
 \cleardoublepage
 \appendix
 \section{Proofs of properties of MSP sub-CDAGs}\label{app:proof lemma}

\begin{proof}[Proof of Lemma~\ref{lem:domtype3}]
Without loss of generality, let us assume $ |\subpInput{}_{A_i}|\geq|\subpInput{}_{B_i}|$. The proof for the case $|\subpInput{}_{A_i}|<|\subpInput{}_{B_i}|$ follows an analogous argument.
Let $\dom{}$ be a dominator set for the set of vertices corresponding to $T_i'$ with respect to
$\subpInput{}_{A_i}$. Consider a possible assignment to the values of $B_i$ such that all the values corresponding to vertices in $\subpInput{}_{B_i}$ are assigned value 1. 
Under such assignment, for every variable $a\in A_i$ corresponding to a vertex in $\subpInput{}_{A_i}$ at least one of the elementary products in $T_i'$ assumes value $a$. The lemma follows combining statements (i) and (ii):\\ (i) There exists an
assignment of the input variables of $P_i$ corresponding to vertices in $\subpInput{}_i\setminus \subpInput{}_{A_i}$
such that the output variables in $T_i'$ assume at least
$s^{|\subpInput{}_{A_i}|}$ distinct values under all possible assignments of the variables corresponding to vertices in $\subpInput{}_{A_i}$.\\ (ii) Since all paths form $\subpInput{}_{A_i}$ to the vertices corresponding to the variables in $T_i'$ intercept $\dom$, the values of the elementary products in $T_i'$ are determined by the inputs in
$\subpInput{}_i\setminus \subpInput{}_{A_i}$, which are fixed, and by the values of the
vertices in $\dom{}$; hence the elementary products in $T_i'$ can take at most
$s^{|\dom|}$ distinct values. 
\end{proof}

In the following we denote as $\globalInput{}$ the set of input vertices of the CDAG $G^{\mathcal{A}}$. That is, $\mathcal{X}$ is the set of vertices which corresponds each to one of the digits in the base-$s$ expression of the input $n$-digit integers $A$ and $B$. 

\begin{proof}[Proof of Lemma~\ref{lem:newconnect}]
Let $\ell$ denote the number of recursive steps of algorithm $\mathcal{A}$ at which the MSP $P_i$ is generated. The proof is by induction on $\ell$.
In the base case $\ell=0$, that is $P_i$ is the only (improper) $n'$-MSP generated by $\mathcal{A}$. Hence, $\subpInput{}$ and $\globalInput{}$ coincide and the statement is trivially verified. 
We assume inductively that the statement holds for $\ell = j \geq 0$, and we show it holds also for $\ell =j+1$.
As $\ell>0$, we have that in the first level of recursion algorithm, $\mathcal{A}$ generates $2k-1$ sub-problems, $P^{(1)},P^{(2)},\ldots,P^{(2k-1)}$, where the value $k$ is given by the instruction function associated with $\mathcal{A}$.
Let us assume, without loss of generality, that $P_i$ is recursively generated by $\mathcal{A}$ while solving $P^{(1)}$, and let $G^{\mathcal{A}_{P_1}}$ be the sub-CDAG of $G^{\mathcal{A}}$ corresponding to $P_1$. By inductive hypothesis, and by the definition of dominator set,  there exists a subset $K$ of the input vertices of $G^{\mathcal{A}_{P_1}}$ with $|K|=|Y|$ such that there exists $|K|$ vertex-disjoint paths connecting vertices of $K$ to vertices in $Y$.
From the construction of  $G^{\mathcal{A}}$, that the input vertices of $G^{\mathcal{A}_{P_1}}$, and hence the vertices in $K$, are connected to the vertices in $\globalInput$ by means of an encoder sub-CDAG $Enc_{k,n}$ of $G^{\mathcal{A}}$, where $n$ is the  number of digits in the base-$s$ expression of the input integers $A$ and $B$.    

From Lemma~\ref{lem:conneconder}, we have that there exists a subset of the input vertices $X$, with $|X| = |K|$, such that there exist $|K|$ vertex-disjoint paths connecting the vertices in $X$ to the vertices in $K$. By composing the $|X| = |K|$ vertex-disjoint paths connecting vertices in $X$ to vertices $K$ with the $|K|=|Y|$ vertex-disjoint paths connecting vertices in $K$ to vertices in $Y$, we can conclude that there exist indeed $|X| = |Y|$ vertex-disjoint paths connecting vertices in $X$ to vertices in $Y$. Let $D$ be any dominator of $Y$. From the existence of $X$, we have $|D|\geq |X|=|Y|$.
\end{proof}

\begin{proof}[Proof of Lemma~\ref{lem:domtype1}]
If $\nu(n')=0$, the statement is vacuously true. In the following, we assume $\nu(n')\geq 1$.

Assume towards contradiction that there exists a dominator $D$ for $Z$ such that $|D|< |Z|/2$. Thus, any path connecting input an input vertex of  $G^\mathcal{A}$ to a vertex of $Z$ must include at least one vertex in $D$.  
Let $G^{\mathcal{A}_{P_1}},G^{\mathcal{A}_{P_2}}\ldots,G^{\mathcal{A}_{P_{\nu(n')}}}$ denote the sub-CDAGs of $G^{\mathcal{A}}$, each corresponding to one of the $\nu(n')$  $n'$-MSPs generated by $\mathcal{A}$. We define as $D'$  the subset of $D$ composed by vertices \emph{internal} to the sub-CDAGs $G^{\mathcal{A}_{P_i}}$, excluding their respective inputs
For $i=1,2,\ldots,\nu(n')$, let $Z_i$ (resp., $D_i$) denote the subset of $Z$ (resp., $D'$) composed by vertices in $G^{\mathcal{A}_{P_i}}$. As, from Lemma~\ref{lem:distinct}, the sub-CDAGs $G^{\mathcal{A}_{P_i}}$ are vertex-disjoint, $\{Z_1,Z_2,\ldots,Z_{\nu(n')}\}$ partition $Z$ (resp., $\{D'_1,D'_2,\ldots,D'_{\nu(n')}\}$ partition $D'$). Let:
\begin{equation*}
    i^* = \argmin_{i=1,2,\ldots,\nu(n')} |D'_i|/|Z_i|.
\end{equation*}
In the following, we denote as $n_{i^*}$ the input size of the $n'$-MSP $P_{i^*}$, that is, $n_{i^*}$ denotes the number of digits of the base-$s$ expansion of each of the input integers of $P_{i^*}$. 

Let $Y'\subseteq \subpInput{}_{i^*}$ be such that all paths connecting the vertices in $Y'$ to the output vertices in $Z_{i^*}$ include at least a vertex in $\dom{}_{i^*}$ (i.e., $Y'$ is the largest subset of $\subpInput{}_{i^*}$ with respect to whom $\dom{}_{i^*}$ is a dominator set for $Z_{i^*}$). 
From Lemma~\ref{cor:flowcor} we have:
	\begin{equation*}
		|\dom{}'_{i^*}|\geq |Y'||Z_{i^*}|/n_{i^*},
	\end{equation*}
which implies $|Y'|\leq n_{i^*}|D'_{i^*}|/|Z_{i^*}|$.

Let $Y =\subpInput{}_{i^*} \setminus Y'$. By the definition of $Y'$, the vertices in $Y$ are those among the ones in $\subpInput{}_{i^*}$ that are connected to vertices in $Z_{i^*}$ by directed paths with no vertex in $D_{i^*}$. We have:
\begin{equation*}
	|Y| \geq |\subpInput{}_{i^*}|- n_{i^*}|D'_{i^*}|/|Z_{i^*}| \geq n_{i^*}\left(1/2-|D'_{i^*}|/|Z_{i^*}|\right),
\end{equation*}
where the last passage follows from the fact that, by definition of $\subpInput{}_{i^*}$, we have $|\subpInput{}_{i^*}|\geq  n_{i^*}/2$.
According to a well-known property of the ratios of sums we have:
    \begin{equation*}
    	\frac{|D'_{i^*}|}{|Z_{i^*}|} = \min_{i\in\{1,2,\ldots,\nu_2(n')\}} \frac{|D'_i|}{|Z_i|} \leq \frac{\sum_{i=1}^{\nu_2(n')} |D'_i|}{\sum_{i=1}^{\nu_2(n')} |Z_i|} = \frac{|D'|}{|Z|},
    \end{equation*}
    and thus:
\begin{equation*}
	|Y| = n_{i^*}\left(1/2-|D'_{i^*}|/|Z_{i^*}|\right) \geq n_{i^*}\left(1/2-|D'|/|Z|\right),
\end{equation*}

By Lemma~\ref{lem:newconnect}, there exists a set $X\subseteq\globalInput{}$, where $\mathcal{X}$, denotes the set of input vertices of $G^{\mathcal{A}}$ with $|X|=|Y|$, such that vertex in $X$ are connected to vertices in $Y$ 
by $|X|=|Y|$ vertex-disjoint paths. As such paths are vertex-disjoint, at most $|D\setminus D'|$ of these paths may include at least a vertex in $D\setminus D'$. Hence, we have that there exist $\pi$ paths connecting vertices in $Z$ to vertices in $\globalInput{}$ where:
   \begin{align*}
		\pi & \geq   n_{i^*}\left(1/2-|D'|/|Z|\right)-|\dom\setminus\dom'|\\
		&\geq n_{i^*} \left(1/2 - |\dom|/|Z|\right),
	\end{align*}  
where the last passage follows as $P_{i^*}$ is a $n'$-MSP and, thus,  by Definition~\ref{def:maxsubp}, $n_{i^*}\geq n'\geq 2|Z|$.
For $|D|\leq|Z|/2-1$, we have that $\pi\geq 1$, which contradicts $D$ being a dominator set for $Z$.
\end{proof}

\section{Proof of Theorem~\ref{thm:memindbound}}\label{app:proofmind}

\begin{proof}[Proof of Theorem~\ref{thm:memindbound}]
We prove the result for the case $\msgsize{} = 1$. The bound then trivially generalizes for $\msgsize{}>1$.

Let $G^{\mathcal{A}} = \left(V,E \right)$ denote the CDAG associated with algorithm $\mathcal{A}$ according to the construction in Section~\ref{sec:CDAG}.

Recall that $\nu_{1}(n')$ (resp., $\nu_2(n')$) denotes the number of Type 1 (resp., Type 2) $n'$-MSP generated by $\mathcal{A}$, and that $\nu(n')=\nu_{1}(n')+\nu_{2}(n')$. By Definition~\ref{def:maxsubp}, each of these MSPs has input size greater than or equal to $n'$. 


For each $n'$-MSP $P_i$  we denote as $\subpInput{}_i$ (resp., $\subpOutput{}_i$) the set of input (resp., output) vertices of the sub-CDAG $G^{\mathcal{A}}_{P_i}$ which correspond each to one of the $\lceil n_i/2\rceil$ least significant digits of the input integers $A_i$ and $B_i$ of $P_i$ (resp., to one of the $\lceil n_i/2\rceil$ digits of the output product $C_i$ form the $(\lceil n_i/2\rceil+1)$-least significant to the $n_i$-least significant).  Finally, we define $\subpInput{} = \cup_{i=1}^{\nu(n')} \subpInput{}_i$ and $\subpOutput{} = \cup_{i=1}^{\nu(n')} \subpOutput{}_i$.

For each Type 1 $n'$-MSP $P_i$ generated by $\mathcal{A}$, with input integers of  size $n_i\geq n'$, we denote as $T_i$ the set of variables whose values correspond to the at least $\lceil n_i^2/4\rceil$ elementary products  $A_i[j]B_i[k]$ for $j,k = 0,1,\ldots,\left(n_i-1\right)/2$ and $n_i/4\leq j+k\leq 3n_i/4$. By definition, these elementary products are computed by any integer multiplication algorithm from the standard class towards computing the output digits from the $\left(\halfsubpsize{}+1\right)$-least significant up to the $\subpsize$-th least significant one (i.e., the values corresponding to vertices in $\subpOutput{}_i$). Further, we denote as $\mathcal{T}_i$ the set of vertices corresponding to the variables in $T_i$, and we define $\mathcal{T}= \cup_{i=1}^{\nu_1(n')} \mathcal{T}_i$.  Clearly, $|\mathcal{T}_i|\geq n_i^2/4$, and $4|\mathcal{T}|= \sum_{i=1}^{\nu_1(8M)} n_i^2$.

At least one of the $\nproc{}$ processors must compute (a) exactly $\mathcal{T}/P$ distinct values corresponding to vertices in $\mathcal{T}$ (henceforth denoted as $\mathcal{T}'$) or (b) $n'\nu(n')/P$ distinct values corresponding to vertices in $\subpOutput{}$ (denoted as $\subpOutput{}'$). Let $P'$ be such processor, we denote as $X'$ the set of vertices composed of the vertices corresponding to the at most $\alpha n/\nproc{}$ input values initially stored in the local memory of $\nproc{}'$. Finally, we denote as $X''$ the set of vertices of $G^{\mathcal{A}}$ which are descendants \emph{only} of vertices in  $X'$. That is, there is no directed path connecting vertices corresponding to the input values not in $X'$ to vertices in $X''$. 
Intuitively, the values corresponding to the vertices in  $X'\cup X''$ are all those which are either stored in $P'$ at the beginning of the computation ($X'$) or those which can be computed by $\nproc{}'$ without requiring any value from any other processor ($X''$).

Consider the set $L$ of vertices of $G_{\mathcal{A}}$ corresponding to the  memory words sent or received  by $P'$ executed during the computation of $\mathcal{A}$ (i.e., $|L|=g'$).
Below we show that $g'\geq c \min\{n',\sqrt{|\mathcal{T'}|}\}=c \min\{n',\sqrt{|\mathcal{T}|/P}$ for case (a) and $g'\geq \min\{n',|\mathcal{Z}'|\}= \min\{n',|\mathcal{Z}|/\nproc{}\geq n'\nu(n')\}$ for case (b), from which the theorem follows. We consider the two cases separately: \\

\noindent\textbf{Case (a):} For each Type 1 $n'$-MSP $P_i$  let $\mathcal{T}'_i$ denote the subset of $\mathcal{T}'$ composed of the vertices in the sub-CDAg corresponding to $P_i$. As the $\nu_1$ sub-CDAGs corresponding each to one of the Type 1 $M$-MSPs are vertex-disjoint, so are the sets $\mathcal{T}'_i$. Hence, the $\mathcal{T}'_i$'s constitute a partition of $\mathcal{T}'$. 
    Let $A_i$ and $B_i$ (resp., $C_i$) denote the input integers (resp., output product) of $P_i$, where $A_i$ and $B_i$ can be expressed in base-$s$ using up to $n_i$ digits.
    
    For $k = n_i/4,n_i/4+1,\ldots,3n_i/4-1$, we say that $\nproc{}'$ ``\emph{contributes to}  $C_i[k]$'' if \emph{any} of the at least $n_i/4$ elementary products $A_i[r]B_i[s]$, for $0\leq r,s< n_i/2$ and $r+s=k$, correspond to any of the vertices in $\mathcal{T}'_i$. Further, we say that a $A_i[r]$ (resp., $B_i[r]$) is ``\emph{accessed by} $\nproc{}'$'' if \emph{any} of the elementary multiplications $A_i[r]B_i[l]$ (resp., $A_i[l]B_i[r]$), for $l= 0,1,\ldots,n_i-1$, correspond to any of the vertices in $\mathcal{T}^{(j)}_i$. In the following we denote as $\alpha_i$ (resp., $\beta_i$) as the number of digits of $A_i$ (resp., $B_i$) which are accessed during $\mathcal{C}_j$, and let $\gamma_i$ denote the number of digits of $C_i$ to whom $\nproc{}'$ contributes to. 
    By definition, $\max\{\alpha_i,\beta_i\}\geq \lceil\sqrt{|\mathcal{T}_i|}\rceil$ and $\gamma_i\geq \lceil |\mathcal{T}_i|/\max\{\alpha_i,\beta_i\}\rceil$.
    
Assume that for at least one of the $\nu_1(8M)$ Type 1 $n'$-MSPs, henceforth denoted as $P_i$, at least $\min\{n'/4, \sqrt{|\mathcal{T}|/2P}\}$ among the $n_i/2$ least significant digits of $A_i$ or $B_i$ are accessed during $\mathcal{C}_j$. We denote as $Y$ the set of vertices of $G^{\mathcal{A}}$ corresponding to such values. Clearly $Y\subseteq\mathcal{Y}_i\subseteq \mathcal{Y}$. In order for $\nproc{}'$ to access the at least $\min\{n'/4, \sqrt{|\mathcal{T}|/2P}\}$ values corresponding to vertices in $Y$ there must be no path connecting any of the at least $n-\alpha n/P$ input vertices of $G^{\mathcal{A}}$ corresponding to input values not initially stored in the local memory of $P'$ to vertices in $Y$ which does not have at least one vertex in $L$. That is,  $L$ has to be a \emph{dominator set} of $Y$ with respect to the at least $n-\alpha n/P$ input vertices of $G^{\mathcal{A}}$ corresponding to values not initially stored in the local memory of $P'$.  
    
    By Lemma~\ref{lem:domtype3}, any denominator $D$ of $Y$ satisfies
  $$|D|\geq |Y|\geq \min\{n'/4, \sqrt{|\mathcal{T}|/2P}\}.$$   Hence, there are $\min\{n'/4, \sqrt{|\mathcal{T}|/2P}\}$ vertex-disjoint paths connecting vertices input vertices of $G^{\mathcal{A}}$ to vertices of $\mathcal{Z}''$. By definition, at most $\alpha n/P$ of these paths may share vertices with $X'\cup X''$. In turn this implies that $$|L|=g'\geq \min\{\frac{n'}{4}, \sqrt{\frac{|\mathcal{T}|}{2\nproc{}}}\} - \alpha \frac{n}{\nproc{}}\geq \min\{\frac{n'}{4}, n'\sqrt{\frac{\nu_i(n')}{2P}}\} - \alpha \frac{n}{\nproc{}}.$$
  
  Assume instead that for all Type 1 $n'$-MSPs $P_i$ generated by $\mathcal{A}$ strictly less than  $\min\{n'/4, \sqrt{|\mathcal{T}|/2P}\}$ among the $n_i/2$ least significant digits of $A_i$ or $B_i$ are accessed during $\mathcal{C}_j$. That is $\max_{i\in\{1,2,\ldots,\nu_1(n')\}}\{\alpha_i,\beta_i\}<2M$. Let $C_i[k]$ be active during $\mathcal{C}_j$. In order for $\nproc{}'$ to compute $C_i[k]$ \emph{entirely} (i.e., without receiving partial accumulators of the summation $\sum_{r,s\geq 0|r+s=k} A_i[r]B[s]$ from other processors and/or sending partial accumulators for the summation to other processors by means of an \io{} operation), it will be necessary to evaluate all the at least $n_i/4$ elementary products which are added in the summation during $\mathcal{C}_j$ itself. Thus, at least $n_i/4\geq n'/4$ values corresponding to digits form $A_i$ and $B_i$ must be accessed during $\mathcal{C}_j$. As, by assumption, for all $i\in\{1,2,\ldots,\nu_1(n')\}$ we have $\alpha_i,\beta_i<\min\{n'/4, \sqrt{|\mathcal{T}|/2P}\}$,  none of the values $C_i[k]$ which are active during $\mathcal{C}_j$ are computed entirely during $\mathcal{C}_j$ itself.

  As $\mathcal{C}$ is a parsimonious computation, for any value $C[k]$ to the computation of whom $\nproc{}'$ contributed either $\nproc{}'$ will receive some partial accumulator computed by at least another processor, or the value of  partial accumulator for $C[k]$ computed by $\nproc{}'$ must be sent to at least another processor. We assume that such accumulators are received or sent a single memory word for any value $C[k]$ to whom $\nproc{}'$ contributes. Thus for all  the $\sum_{i=1}^{\nu_1}\gamma_i$ values $\nproc{}'$ contributes to partial accumulators must be either sent from $\nproc{}'$ to other processors, or received by $\nproc{}'$ from another processor by means of an \io{} operation. As, by assumption $\max_{i\in\{1,2,\ldots,\nu_1(n')\}}\{\alpha_i,\beta_i\}<\sqrt{|\mathcal{T}|/2P}$, form the previous considerations we have:
  \begin{align*}
      \sum_{i=1}^{\nu_1(n')}\gamma_i &\geq \sum_{i=1}^{\nu_1(n')}\bigl\lceil\frac{|\mathcal{T}'_i|}{\max\{\alpha_i,\beta_i\}}\bigr\rceil\\
      &> \sum_{i=1}^{\nu_1(n')}\frac{|\mathcal{T}'_i|}{\sqrt{\frac{|\mathcal{T}|}{2\nproc{}}}}\\
      &=\sqrt{\frac{2\nproc{}}{|\mathcal{T}|}}\sum_{i=1}^{\nu_1(n')}|\mathcal{T}'_i|\\
      &=\sqrt{\frac{2\nproc{}}{|\mathcal{T}|}}|\mathcal{T}'|\\\
      &=\sqrt{\frac{2\nproc{}}{|\mathcal{T}|}}\frac{|\mathcal{T}|}{P}\\
      &\geq \sqrt{\frac{n'^2\nu_1(n')}{2\nproc{}}}.
  \end{align*}

\noindent\textbf{Case (b):} If $|\mathcal{Z}'|>n'/2$, let $\mathcal{Z}''$ be any subset of $\mathcal{Z}'$ such that $||\mathcal{Z}'|=n'/2$. Otherwise, $|\mathcal{Z}''|= |\mathcal{Z}'|$. In order for $\nproc{}'$ to compute the values corresponding  to the vertices in  $\subpOutput{}''$ there must be no path connecting any of the at least $n-\alpha n/P$ input vertices of $G^{\mathcal{A}}$ corresponding to input values not initially stored in the local memory of $P'$ to a vertex in $\mathcal{Z}''$ which does not have at least one vertex in $L$. That is,  $L$ has to be a \emph{dominator set} of $\subpOutput{}''$ with respect to the at least $n-\alpha n/P$ input vertices of $G^{\mathcal{A}}$ corresponding to values not initially stored in the local memory of $P'$. 

From Lemma~\ref{lem:domtype1}, any dominator set $D$ of $\mathcal{Z}''$ satisfies $|D|\geq |\mathcal{Z}'|/2\geq \min\{n',n'\nu(n')/P\}$. Hence, there are $\min\{n',n'\nu(n')/P\}$ vertex-disjoint paths connecting vertices input vertices of $G^{\mathcal{A}}$ to vertices of $\mathcal{Z}''$. By definition, at most $\alpha n/P$ of these paths may share vertices with $X'\cup X''$. In turn this implies that $|L|=g'\geq \min\{n',n'\nu(n')/P\} - \alpha n/\nproc{}$.\\

Clearly, 
$$
\min\{\sqrt{\frac{n'^2\nu_1(n')}{2\nproc{}}}, \frac{n'\nu(n')}{\nproc{}}\}\geq\frac{1}{2} \left(\sqrt{\frac{n'^2\nu_1(n')}{2\nproc{}}}+\frac{n'\nu(n')}{\nproc{}}\right).
$$
The statement follows.
\end{proof}
\end{document}